\documentclass[prx,prl,a4paper,aps,twocolumn,superscriptaddress,longbibliography]{revtex4-1}
\usepackage{amssymb,amsfonts,bm}
\usepackage{graphicx}
\usepackage{epstopdf}
\usepackage{times}
\usepackage{float}
\usepackage{lipsum}
\usepackage{color}
\usepackage{dcolumn}
\usepackage{bm}
\usepackage{subfigure}
\usepackage[colorlinks,linkcolor=blue,anchorcolor=blue,urlcolor=blue,citecolor=blue]{hyperref}
\usepackage{verbatim} 
\usepackage{amsmath} 
\usepackage{tikz}
\usepackage{braket}
\usepackage{bbold}
\usepackage{tikz} 
\usetikzlibrary{quantikz}

\usepackage{amsthm}
\newtheorem{theorem}{Theorem}
\newtheorem{definition}[theorem]{Definition} 

\newtheorem{corollary}[theorem]{Corollary}
\newtheorem{example}[theorem]{Example}
\newtheorem{proposition}[theorem]{Proposition}
\newtheorem{remark}{Remark}

\usepackage{hyperref}
\hypersetup{colorlinks=true, linkcolor=blue, citecolor=red, urlcolor=blue  }

\usepackage{physics}

\usepackage{algorithm}
\usepackage{algpseudocode}

\allowdisplaybreaks[4]
\usepackage[normalem]{ulem}

\begin{document}

\title{A Lie-algebraic Criterion for the Universality of Exponentiated Quantum Gates}

\author{Yinuo Xue}
\email{yinuoxue@link.cuhk.edu.cn}
\affiliation{Mathematics Research Center, School of Science and Engineering, The Chinese University of Hong Kong, Shenzhen, China}

\author{Qian Chen}
\email{chenqian.phys@gmail.com}
\affiliation{Mathematics Research Center, School of Science and Engineering, The Chinese University of Hong Kong, Shenzhen, China}

\author{Jing-Song Huang}
\email{corresponding author: huangjingsong@cuhk.edu.cn}
\affiliation{Mathematics Research Center, School of Science and Engineering, The Chinese University of Hong Kong, Shenzhen, China}


\begin{abstract}

We present a criterion that serves as the basis for a polynomial-time algorithm to decide whether a finite set of qudit gates exponentiated by some Hamiltonians is universal. Our approach formulates universality in Lie algebraic terms and applies Borel--de Siebenthal theory with a diagonal generator having incommensurate spectrum. In this framework, nonuniversality is detected by invariant subspaces, equivalently by a graph-connectivity obstruction, while universality is repaired by adding generators that couple disconnected components. We further prove that two generators are sufficient for universal control. Our work reveals a profound link between qudit universality and irreducibility of Lie algebra representations.

\end{abstract}

\maketitle



Universality is the foundational premise of quantum computation, guaranteeing that any target unitary operation can be approximated to arbitrary precision by a finite sequence of available operations.
While the conditions for universal control have been comprehensively characterized in the qubit regime, higher-dimensional quantum systems---qudits---are increasingly recognized for their potential to transcend qubit limitations. Qudit architectures promise substantially increased information density, simplified error correction, and the efficient native realization of complex algorithmic primitives \cite{oszmaniec2016random,sawicki2014convexity}. Consequently, developing rigorous criteria to verify the universality of a given $d$-dimensional ($d \geq 2$) control set has emerged as a critical challenge in quantum information science.

Historically, universality criteria was framed in a Lie group-theoretic term, based on finite sets of unitary gates. However, this approach introduces two practical bottlenecks. First, verifying universality from discrete matrix generators typically incurs a prohibitive computational overhead, frequently scaling beyond polynomial time \cite{brennen2005criteria,sawicki2017universality,van2021universality}, rendering these criteria intractable for high-dimensional verification tasks. Second, finite gate analysis is fundamentally decoupled from the physical reality of near-term and fault-tolerant quantum hardware. In practice, unitary gates are not directly implemented as discrete primitives; instead, they arise from time evolution under hardware-native Hamiltonians \cite{albertini2003notions}.

In this Letter, we resolve the above problems by developing a systematic, polynomial-time algorithm for verifying single-qudit universality Lie algebraic framework. We focus on single-qudit gates, since universality for multi-qudit gates can be achieved by a single-qudit universal set plus an entangling two-qudit gate \cite{Brylinski2002MQC,Oszmaniec:2017cyf}. Rather than asking whether a discrete sequence of abstract gates densely covers $U(d)$ or $SU(d)$, we pose a fundamentally operational question: \textit{under what conditions does the continuous-time evolution generated by exponentiating a set of available skew-Hermitian generators natively yield a universal gate family?}

Our approach leverages a ubiquitous feature of physical quantum systems: the presence of at least one diagonal Hamiltonian with incommensurate spectrum, serving as a ``general direction'' in the Lie algebra. We show that the continuous evolution of this single element ensures that the generated group is dense in a maximal torus of the target group. By anchoring our system to this maximal torus, we unlock the application of the Borel--de Siebenthal classification theory for compact Lie groups \cite{BorelDeSiebenthal1949}. This theorem, when combined with the existence of a maximal torus, proves that a system fails to be universal if and only if its generators exhibit a nontrivial block-diagonal form.

This leads us to a simple operational criterion of this Letter. Consequently, universality of a connected Lie group is detected by absence of invariant coordinate subspaces under the action of the remaining off-diagonal Hamiltonians. This immediately translates the universality problem into a highly efficient graph-connectivity algorithm that runs in polynomial time with respect to both the system dimension $d$ and the number of generators. Furthermore, our Lie algebraic approach is inherently constructive: if a hardware topology is found to be non-universal (reducible), our framework explicitly computes the missing generators required to repair the set and achieve full universality. As a fundamental consequence of this graph-theoretic reduction, we are able to rigorously establish that a minimal topology consisting of exactly two Hamiltonians --- a diagonal drift and a single connected control---is sufficient to achieve full universality.





\section{The Lie algebraic criterion}
\label{sec:LieTheory}

This section provides the necessary background for our main results. We begin with the definition of universality for a Lie group and then discuss its connection to the corresponding notion for its Lie algebra. We consider the scenario where the set of generators includes a diagonal Hamiltonian with incommensurate spectrum — i.e., a general element that generates a maximal torus. According to the Borel--de Siebenthal classification, any connected closed subgroup of $U(d)$ or $SU(d)$ that contains a maximal torus must either be the full group or preserve a nontrivial block-diagonal structure. This translates the verification of universality into checking whether the defining representation (on $\mathbb{C}^d$) of the generated Lie algebra is irreducible.

\subsection{The universality of a Lie group and its Lie algebra}

We define universality of a Lie group $G$ and that of its Lie algebra $\mathfrak{g} \equiv \operatorname{Lie}(G)$.
\begin{definition}[Universality of a Lie group]
A finite set of gates $\mathcal{S}=\{U_1,\dots,U_m\} \subset G$ where $G$ is a Lie group, is said to be universal if the following set generated by elements of $\mathcal{S}$ via group multiplication
\begin{equation}
  \langle \mathcal{S} \rangle = \left\{U_{i_{1}}U_{i_{2}} \cdots U_{i_{m}} \vert U_{i_{j}} \in \mathcal{S}, m\in \mathbb{N}  \right\}  
\end{equation}
is dense in $G$, i.e., the closure $\overline{\langle \mathcal{S} \rangle} =G$. In fact $\overline{\langle \mathcal{S} \rangle}$ is a Lie group \cite{sawicki2017universality}.
\end{definition}

In other words, every element of $G$ can be approximated arbitrarily well by finite products of elements of $\mathcal S$.

\begin{definition}
Consider a finite set of elements in Lie algebra $\mathfrak{g}$, denoted as $\mathcal{X}=\{X_1,\dots,X_m\} \subset \mathfrak{g}$, where $\mathfrak{g}$ is either $\mathfrak{u}(d)$ or $\mathfrak{su}(d)$.
Consider the following set generated by elements of $\mathcal{X}$ via Lie bracket commutators
\begin{equation}
\mathfrak{l}:=\operatorname{Lie}_{\mathbb{R}}\langle \mathcal{X} \rangle= \operatorname{Lie}_{\mathbb{R}} \langle X_1,\dots,X_m\rangle
\subset \text{End}_{\mathbb R}(V).
\label{eq:GeneratingLieAlgebra}
\end{equation}
Note that $\mathfrak{l}$ is the smallest real Lie algebra containing $\mathcal{X}$.    
\end{definition}

From now on, we will be interested in the cases $G=U(d)$ and $G=SU(d)$. It is natural to study universality at the Lie algebra level, since quantum gates are implemented by exponentiating Hamiltonians. Consider the small-step gate set
\[
S_\varepsilon=\{\exp(\varepsilon X_1),\dots,\exp(\varepsilon X_m)\}.
\]
For sufficiently small $\varepsilon>0$, the closure of $\langle S_\varepsilon \rangle$, denoted as $H_{\varepsilon}=\overline{\langle \mathcal{S}_{\varepsilon} \rangle}$, has Lie algebra $\operatorname{Lie}(H_{\varepsilon})$ containing $\operatorname{Lie}\langle \mathcal X\rangle$; hence, if $\operatorname{Lie} \langle \mathcal X\rangle=\mathfrak g$, then $\operatorname{Lie}(H_{\varepsilon})=\mathfrak{g}$, implying that $S_\varepsilon$ is universal (see Proposition~\ref{SM:prop:small-step} in Supplemental Material in details).

Following this observation, we construct the set $S_\varepsilon \subseteq G$ from its Lie algebra. The closure of $S_\varepsilon$ is a connected closed subgroup of $G$, and the Borel--de Siebenthal classification allows us to analyze its structure and thereby determine whether $S_\varepsilon$ is universal.
  



\subsection{Universality and irreducibility}
\label{subsec:Borel-deSienbentalTheory}

\paragraph*{Maximal tori and general elements/directions.}


Let $G$ be a compact connected Lie group. A maximal torus in $G$ is a closed, connected, abelian subgroup maximal under inclusion. For such a group, maximal tori exist and are unique up to conjugation \cite{BroeckerTomDieck1985}, and their common dimension is called the rank of $G$. A closed subgroup $H\subset G$ is said to have maximal rank if it contains a maximal torus of $G$.

For $G=U(d)$ we use the standard maximal tori
\[
T_{U(d)} =
\bigl\{ \operatorname{diag}(e^{i\theta_{1}},\dots,e^{i\theta_{d}}) :
\theta_{j}\in\mathbb{R} \bigr\},
\]
with $\operatorname{rank} U(d) = d$. Likewise for $G=SU(d)$,
\[
T_{SU(d)} =
\bigl\{ \operatorname{diag}(e^{i\theta_{1}},\dots,e^{i\theta_{d}}) :\theta_{1}+\cdots+\theta_{n} \equiv 0 \ (\mathrm{mod}\ 2\pi)\bigr\}
\]
with each $\theta_{j}\in\mathbb{R}$, and $\operatorname{rank} SU(d) = d-1$.
\


\begin{definition}[General elements and general directions]
\label{def:GeneralElements}
Let $G$ be a compact connected Lie group. An element $g\in G$ is called \emph{general} if the closure of the cyclic subgroup, generates is a maximal torus, i.e. $$\overline{\langle g\rangle}=T,
\quad \text{where} \quad
\langle g\rangle = \left\{g^{k} \mid k 
\in \mathbb{N}\right\},$$ 
for some maximal torus $T\subset G$.
An element $X\in\operatorname{Lie}(G)$ is called a \emph{general direction} if $\exp(X)$ is general.

\end{definition}

\noindent In the case of a standard torus $T^d = \mathbb{R}^d / 2\pi\mathbb{Z}^d$, general elements are characterized by linear independence over~$\mathbb{Q}$ of the angular components --- a form of the Kronecker-Weyl density theorem \cite{BroeckerTomDieck1985}. For the $U(d)$ and $SU(d)$ cases, this condition admits an explicit characterization.

\smallskip

Any diagonal element of $U(d)$ can be written as $\operatorname{diag}(e^{i\phi_1},\dots,e^{i\phi_d}) \in U(d)$ with some $\phi_{i} \in \mathbb{R}$. Then a general element $g$ with respect to $U(d)$ or $SU(d)$ is defined as the following:
\begin{enumerate}
    \item For $U(d)$, if $1,\frac{\phi_1}{2\pi},\dots,\frac{\phi_d}{2\pi}$ are linearly independent over $\mathbb Q$, then $\overline{\langle g\rangle}=T_{U(d)}.$ Then $g$ is said to be general, and $X=\operatorname{diag}(i\phi_1,\dots,i\phi_d)\in\mathfrak u(d)$ a diagonal general direction.

    \item For $SU(d)$, let $g=\operatorname{diag}(e^{i\phi_1},\dots,e^{i\phi_d})\in SU(d)$ with $\phi_1+\cdots+\phi_d\equiv 0 \pmod{2\pi}$. If $1,\frac{\phi_1}{2\pi},\dots,\frac{\phi_{d-1}}{2\pi}$ are linearly independent over $\mathbb Q$, then $\overline{\langle g\rangle}=T_{SU(d)}.$ Then $g$ is said to be general, and $X=\operatorname{diag}(i\phi_1,\dots,i\phi_d)\in\mathfrak{su}(d)$ a diagonal general direction.

\end{enumerate}


Such diagonal general directions can be chosen explicitly and are built into our construction.
The linear independence over $\mathbb{Q}$ would not be affected by some scaling factor.
Further, $\exp(\varepsilon X_1)\in S_\varepsilon$ is general, then
\[
\overline{\langle \exp(\varepsilon X_1)\rangle}=T_G\subset H_\varepsilon.
\]
In the small-step regime, $H_\varepsilon$ is therefore a connected closed Lie subgroup of maximal rank. This places the problem in the setting of the maximal-rank classification used in the next subsection.

\paragraph*{Universality via general direction.}

We now state the maximal-rank structural consequence that underlies our criterion. The Borel--de Siebenthal classification, together with the unitary-group formulation of Borevich--Krupetskii \cite{BorelDeSiebenthal1949,BorevichKrupetskii1981}, shows that every connected closed maximal-rank subgroup of $U(d)$ or $SU(d)$ is, up to conjugation in the ambient group, of block-diagonal type. Since our algorithm is formulated in the standard basis, we need a coordinate-level version of this statement. Accordingly, we restrict to subgroups containing the standard maximal torus, in which case the remaining ambiguity reduces to a permutation of the coordinate axes. This yields the following coordinate-adapted block-diagonal description.

We then address the next question: if a connected closed subgroup of $U(d)$ contains the maximal torus of diagonal matrices, what structure does it necessarily admit? The answer is that the subgroup must be block-diagonal up to conjugation. This assertion follows from combining the Borel--de Siebenthal classification with the unitary-group formulation of the Borevich--Krupetskii \cite{BorelDeSiebenthal1949,BorevichKrupetskii1981}.
Since we have already adopted the standard basis, we give the following coordinate-adapted statement.

\begin{theorem}[Block-diagonal structure]\label{thm:block}
Let $H \subset G$ be a connected closed subgroup of $G = U(d)$ [resp. $SU(d)$]. If $H$ contains the standard maximal torus $T_G$, then there exists a partition $d = d_1 + \dots + d_r$ and a permutation matrix $P$ (representing a reordering of the standard basis) such that the conjugated subgroup $\tilde{H} = P H P^{-1}$ is exactly the block-diagonal subgroup:
\begin{equation}
    \tilde{H} = \prod_{j=1}^r U(d_j) \quad \left[\text{resp. } S\left(\prod_{j=1}^r U(d_j)\right)\right].
\end{equation}
Furthermore, the Lie algebra of $\tilde{H}$, given by $\text{Lie}(\tilde{H}) = P \text{Lie}(H) P^{-1}$, inherits the same block-diagonal structure from $\tilde{H}$:
\begin{equation}
    \text{Lie}(\tilde{H}) = \bigoplus_{j=1}^{r}\mathfrak{u}(d_j) \quad \left[\text{resp. } \bigoplus_{j=1}^{r}\mathfrak{su}(d_j) \oplus i\mathbb{R}^{r-1}\right].  
\end{equation}
\end{theorem}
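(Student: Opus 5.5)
The plan is to work at the Lie algebra level and use only the adjoint action of the torus; no classification theorem is needed in the coordinate setting. Write $\mathfrak h=\operatorname{Lie}(H)$ and $\mathfrak t=\operatorname{Lie}(T_G)$, the diagonal skew-Hermitian matrices (traceless in the $SU(d)$ case). Since $T_G\subset H$, we have $\mathfrak t\subset\mathfrak h$. Since $H$ is a group containing $T_G$, $\mathfrak h$ is invariant under $\operatorname{Ad}(t)$ for every $t\in T_G$.

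\textbf{Step 1 (weight decomposition).} Decompose $\mathfrak g=\mathfrak t\oplus\bigoplus_{j<k}\mathfrak m_{jk}$, where
\[
\mathfrak m_{jk}=\operatorname{span}_{\mathbb R}\{E_{jk}-E_{kj},\ i(E_{jk}+E_{kj})\}.
\]
The element $t=\operatorname{diag}(e^{i\theta_1},\dots,e^{i\theta_d})$ acts on $\mathfrak t$ trivially and on $\mathfrak m_{jk}$ by rotation through the angle $\theta_j-\theta_k$. The characters $\theta_j-\theta_k$ for $j<k$ are pairwise distinct and nontrivial on $T_G$; this also holds for $SU(d)$ when $d\ge 2$. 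Hence the $\mathfrak m_{jk}$ are pairwise non-isomorphic irreducible real $T_G$-modules. Any $\operatorname{Ad}(T_G)$-invariant subspace containing $\mathfrak t$ is therefore of the form
\[
\mathfrak h=\mathfrak t\oplus\bigoplus_{\{j,k\}\in S}\mathfrak m_{jk}
\]
for some set $S$ of unordered pairs. This is the step I expect to need the most care. Concretely, I would project onto isotypic components by averaging $\operatorname{Ad}(t)\,\chi(t)^{-1}$ over $T_G$ with Haar measure, or alternatively use a single general element. I would also check that each $\mathfrak m_{jk}$ is irreducible rather than splitting into two real lines: a rotation by a non-constant angle has no invariant line.

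\textbf{Step 2 (closure gives an equivalence relation).} Define $j\sim k$ if $j=k$ or $\{j,k\}\in S$. Complexify, so that $\mathfrak h_{\mathbb C}=\mathfrak t_{\mathbb C}\oplus\operatorname{span}\{E_{jk},E_{kj}:\{j,k\}\in S\}$; the root spaces are recovered from $\mathfrak m_{jk}\otimes\mathbb C$. The relation $\sim$ is symmetric by construction. For transitivity with $i\neq k$, use $[E_{ij},E_{jk}]=E_{ik}$: this lies in $\mathfrak h_{\mathbb C}$, so $\{i,k\}\in S$. Let $I_1,\dots,I_r$ be the equivalence classes, with sizes $d_j$. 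Then $\mathfrak h_{\mathbb C}$ is spanned by the diagonal part of $\mathfrak t_{\mathbb C}$ together with all $E_{ab}$ where $a,b$ lie in the same class.

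\textbf{Step 3 (identify the algebra).} In the $U(d)$ case this gives $\mathfrak h=\bigoplus_j\mathfrak u(I_j)$. In the $SU(d)$ case it gives the traceless part of this sum, which is $\bigoplus_j\mathfrak{su}(d_j)\oplus i\mathbb R^{r-1}$. Here the $i\mathbb R^{r-1}$ factor consists of block scalars with total trace zero, and the off-diagonal $\mathfrak{su}$ pieces account for the rest of $\mathfrak t$. Choose a permutation $P$ that lists $I_1$, then $I_2$, and so on, contiguously. Conjugating by $P$ gives the block form $P\mathfrak hP^{-1}$.

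\textbf{Step 4 (back to the group).} A connected Lie group is determined by its Lie algebra, since it is generated by $\exp(\mathfrak h)$. The group $\prod U(d_j)$, respectively $S(\prod U(d_j))$, is connected, closed, and has Lie algebra $P\mathfrak hP^{-1}$. It follows that $PHP^{-1}$ equals that group, and the Lie algebra statement is $\operatorname{Lie}(PHP^{-1})=\operatorname{Ad}(P)\operatorname{Lie}(H)$.
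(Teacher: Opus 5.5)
Your proposal is correct and follows essentially the same route as the paper's proof. Both arguments split $\mathfrak h$ into root (weight) spaces relative to the standard torus, use $[E_{ij},E_{jk}]=E_{ik}$ to obtain an equivalence relation on indices whose classes give the blocks, and return to the group via connectedness; the only difference is cosmetic, since you work with the real $\operatorname{Ad}(T_G)$-isotypic planes $\mathfrak m_{jk}$ (making the symmetry of the root set automatic and explicitly justifying why $\mathfrak h$ splits into weight spaces), whereas the paper complexifies first and argues symmetry from the compact real form.
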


A rigorous proof, relying on root-space decompositions and the Borevich--Krupetskii formulation, is provided in Theorem~\ref{SM:thm:block} of the Supplemental Material. 

There is a direct observation that the required block-diagonal structure is presented under standard basis vectors up to a permutation. In other words, whenever a connected closed maximal-rank subgroup contains the standard maximal torus, one may permute the standard basis so that both the subgroup and its Lie algebra become block-diagonal with respect to the same coordinate decomposition.

However, it should be noted that Theorem~\ref{thm:block} applies only to connected closed subgroups, whereas the closed subgroup generated by a finite gate set need not be connected, since discrete permutations may arise from permutations between blocks of equal size. To circumvent such disconnectedness, we restrict the exponentiation parameter to the small-step regime, where the generated subgroup remains connected.

\begin{proposition}[Small-step connectedness]
\label{prop:smalleps}
Let $H \subset U(d)$ or $SU(d)$ be a closed maximal-rank subgroup with identity component $H^0$. If $H\neq H^0$, then every element of $H\setminus H^0$ lies at operator-norm distance at least $\sqrt{2}$ from the identity. Hence, for sufficiently small $\varepsilon>0$, the subgroup generated by
$S_\varepsilon=\{\exp(\varepsilon X_1),\dots,\exp(\varepsilon X_m)\}$
is connected, implying $H_\varepsilon = H_\varepsilon^0$.
\end{proposition}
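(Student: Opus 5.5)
The plan is to reduce everything to Theorem~\ref{thm:block} applied to the identity component. Throughout I take $H$ to contain the standard maximal torus $T_G$, which is the situation of interest, since $H_\varepsilon\supset T_G$ via the general direction $X_1$. Then $H^0$ is a connected closed subgroup containing $T_G$, so Theorem~\ref{thm:block} applies to it. After a coordinate permutation, $\mathbb{C}^d=V_1\oplus\cdots\oplus V_r$ is an orthogonal sum of coordinate subspaces, and $H^0$ is exactly the full block group: $\prod_j U(V_j)$, or its determinant-one part in the $SU(d)$ case.

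\textbf{Step 1: elements of $H$ permute the blocks.} Let $g\in H$. Since $H^0$ is normal in $H$, conjugation by $g$ preserves $H^0$, so $g$ sends $H^0$-invariant subspaces to $H^0$-invariant subspaces.
\begin{itemize}
\item Each $V_j$ is $H^0$-irreducible, because $U(d_j)$ (and also $SU(d_j)$ together with the torus) acts irreducibly on it.
\item Distinct $V_j$ are pairwise non-isomorphic as $H^0$-modules. Their $T_G$-weights are disjoint sets of coordinate characters $e_k$, and these are pairwise distinct even when restricted to $T_{SU(d)}$ for $d\ge 2$.
\end{itemize}
Hence the decomposition into the $V_j$ is the unique decomposition into irreducibles, with multiplicity one. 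It follows that $gV_j=V_{\sigma(j)}$ for some permutation $\sigma$ of the blocks (necessarily preserving block sizes).

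\textbf{Step 2: the distance bound.} Suppose first that $\sigma$ is trivial. Then $g$ preserves every $V_j$ and is unitary, so $g\in\prod_j U(V_j)$. In the $SU(d)$ case $\det g=1$ as well, so $g$ lies in the full block group, which is $H^0$.

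Therefore $g\in H\setminus H^0$ forces $\sigma(j)=k\neq j$ for some $j$. Pick a unit vector $v\in V_j$. Then $gv\in V_k\perp v$, so
\begin{equation*}
\|(g-I)v\|^2=\|gv\|^2+\|v\|^2=2,
\end{equation*}
and hence $\|g-I\|_{\mathrm{op}}\ge\sqrt2$.

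\textbf{Step 3: small-step connectedness.} Set $H=H_\varepsilon$. This group is closed and contains $T_G$, because $\exp(\varepsilon X_1)$ is general. Each generator satisfies
\begin{equation*}
\|\exp(\varepsilon X_i)-I\|\le e^{\varepsilon\|X_i\|}-1,
\end{equation*}
which is $<\sqrt2$ for $\varepsilon$ small. By Step 2 every generator therefore lies in $H^0$. So $\langle S_\varepsilon\rangle\subset H^0$, and since $H^0$ is closed, $H_\varepsilon\subset H^0$, i.e.\ $H_\varepsilon=H_\varepsilon^0$.

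\textbf{Main obstacle.} I expect Step 1 to be the delicate point: showing that $g$ really permutes the blocks rather than mixing them. This relies on the multiplicity-freeness of the $H^0$-decomposition. The cleanest route is through the distinctness of the coordinate torus weights, including a check of the $SU(d)$ case with $1\times1$ blocks.
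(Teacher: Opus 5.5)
Your proof is correct and follows the same route as the paper. You show that any $g\in H\setminus H^0$ induces a nontrivial permutation of the orthogonal blocks of $H^0$, obtain $\|g-I\|\ge\sqrt2$ from the Pythagorean identity on a unit vector in a moved block, and conclude that small-step generators lie in the closed subgroup $H_\varepsilon^0$.

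The only differences are ones of detail. Where the paper simply cites Borevich--Krupetskii for the block-permutation structure, you derive it directly from normality of $H^0$ and multiplicity-freeness via disjoint torus weights. You also make explicit that a trivial permutation forces $g\in H^0$, and that the bound is applied to $H=H_\varepsilon$, which has maximal rank because of $X_1$. Finally, your reduction from an arbitrary maximal torus to $T_G$ is immediate, since $\|ugu^{-1}-I\|=\|g-I\|$.
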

\begin{proof}
See Proposition~\ref{SM:prop:smalleps} in Supplemental Material.
\end{proof}

Our criterion combines Theorem~\ref{thm:block} and Proposition~\ref{prop:smalleps}, leading to a condition linking universality to the absence of nontrivial invariant subspaces --- i.e., the irreducibility of $\operatorname{Lie}(H_\varepsilon)$ on $\mathbb{C}^d$. Consequently, universality of $S_\varepsilon$ is verified by examining the action of $\operatorname{Lie}(H_\varepsilon)$ on $\mathbb{C}^d$: non-universality occurs precisely when $\operatorname{Lie}(H_\varepsilon)$ preserves a nontrivial coordinate subspace; otherwise, the set is universal.

Notably, this verification can be carried out by examining only the generators in $\mathcal X$ rather than the entire $\operatorname{Lie}\langle\mathcal X\rangle$. The reason is that $\mathcal X$, the Lie algebra $\operatorname{Lie}\langle\mathcal X\rangle$, and $\operatorname{Lie}(H_\varepsilon)$ share the same invariant coordinate subspaces; we defer the technical proof to Lemma~\ref{SM:lem:inheritance} in the Supplemental Material.

We thus arrive at the main structural criterion underlying the algorithm.

\begin{corollary}[Universality via general direction]\label{cor:universality}
Assume that $\mathcal X\subset\mathfrak g$, with $\mathfrak g=\mathfrak u(d)$ or $\mathfrak{su}(d)$, contains a diagonal general direction $X_1$. Then, for a fixed $\varepsilon>0$ satisfying Proposition~\ref{prop:smalleps}, the associated small-step subgroup
\[
H_\varepsilon=\overline{\langle \exp(\varepsilon X_1),\dots,\exp(\varepsilon X_m)\rangle}
\]
is universal, i.e. $H_\varepsilon=G$, if and only if $\mathcal X$ acts irreducibly on the defining representation of $\mathbb C^d$. Equivalently, in the standard basis of $X_1$, universality holds if and only if the generators in $\mathcal X$ have no nontrivial common invariant coordinate subspace.
\end{corollary}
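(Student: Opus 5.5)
The plan is to combine three ingredients stated earlier: the torus containment $T_G\subset H_\varepsilon$ coming from the general direction $X_1$, the connectedness of $H_\varepsilon$ from Proposition~\ref{prop:smalleps}, and the block-diagonal classification of Theorem~\ref{thm:block}. The remaining task is to pass from the group $H_\varepsilon$ back to the generating set $\mathcal X$, which is where the inheritance of invariant coordinate subspaces (Lemma~\ref{SM:lem:inheritance}) is used.

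\emph{Setup.} Since $X_1$ is a diagonal general direction and rational independence is preserved under scaling, $\exp(\varepsilon X_1)$ is general. Hence $T_G=\overline{\langle \exp(\varepsilon X_1)\rangle}\subset H_\varepsilon$. With $\varepsilon$ fixed as in Proposition~\ref{prop:smalleps}, $H_\varepsilon$ is a connected closed subgroup of maximal rank containing the standard torus. Theorem~\ref{thm:block} then gives a partition $d=d_1+\dots+d_r$ and a permutation $P$ with $PH_\varepsilon P^{-1}=\prod_j U(d_j)$ (resp.\ $S(\prod_j U(d_j))$).

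\emph{Forward direction.} Suppose $H_\varepsilon=G$. Then $\operatorname{Lie}(H_\varepsilon)=\mathfrak g$, which acts irreducibly on $\mathbb C^d$; for $\mathfrak{su}(d)$ with $d\ge2$ this is standard, and $\mathfrak u(d)$ contains it. By Lemma~\ref{SM:lem:inheritance}, $\mathcal X$ and $\operatorname{Lie}(H_\varepsilon)$ have the same invariant coordinate subspaces, so $\mathcal X$ has none that is nontrivial. To get full irreducibility rather than merely the absence of coordinate subspaces, I will argue as follows. Any $\mathcal X$-invariant subspace is invariant under $\operatorname{Lie}\langle\mathcal X\rangle$ and under its exponentials, hence under the products $\exp(\varepsilon X_{i_1})\cdots\exp(\varepsilon X_{i_k})$. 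By continuity it is then invariant under their closure $H_\varepsilon=G$, which acts irreducibly.

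\emph{Reverse direction (contrapositive).} Suppose $H_\varepsilon\neq G$. Then $r\ge2$, and $\operatorname{Lie}(H_\varepsilon)$ preserves the span $W$ of the standard basis vectors $P^{-1}e_k$ indexed by the first block. This $W$ is a nontrivial coordinate subspace. Since $\exp(\varepsilon X_i)\in H_\varepsilon$ preserves $W$, and $\varepsilon$ is small enough that the logarithm is well defined, $X_i=\varepsilon^{-1}\log\exp(\varepsilon X_i)$ preserves $W$; this is essentially Lemma~\ref{SM:lem:inheritance}. So $\mathcal X$ is reducible, with a coordinate invariant subspace.

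\emph{Equivalence of the two phrasings.} The second formulation says that irreducibility is equivalent to having no invariant coordinate subspace. One implication is trivial. For the other, the reverse direction shows that reducibility (non-universality) already produces a coordinate invariant subspace. More directly, any subspace invariant under $\mathcal X$ is invariant under $H_\varepsilon\supset T_G$. Since $T_G$ has distinct characters on the coordinate lines, the subspace is a sum of weight spaces and is therefore coordinate.

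\emph{Main obstacle.} The delicate point is the inheritance step: going from invariance under the closure $H_\varepsilon$ to invariance under the individual generators $X_i$, and conversely. This relies on small $\varepsilon$ (a well-defined logarithm) and on closedness of the set of operators preserving $W$. I would handle it by noting that the stabilizer $\{g: gW=W\}$ is a closed subgroup and that its Lie algebra is $\{X: XW\subset W\}$.
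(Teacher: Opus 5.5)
Your proposal is correct and follows the paper's route: torus containment from the general direction, connectedness from Proposition~\ref{prop:smalleps}, the block dichotomy of Theorem~\ref{thm:block}, and the logarithm-based passage between $\operatorname{Lie}(H_\varepsilon)$ and $\mathcal X$ from Lemma~\ref{SM:lem:inheritance}; your direct closure argument for the forward direction and your torus-weight argument (the analogue of Proposition~\ref{pro:irreducible-basis}) just make explicit two points that the paper's short proof leaves implicit. One caveat, which you inherited from the main text: generality of $\exp(\varepsilon X_1)$ is not invariant under rescaling, because the constant $1$ in the list $1,\varepsilon\theta_1/2\pi,\dots$ does not scale (generality fails, e.g., whenever $\varepsilon\theta_1/2\pi\in\mathbb Q$, which happens for $\varepsilon$ arbitrarily close to $0$), so to secure $T_G\subset H_\varepsilon$ you should take $\varepsilon$ small \emph{and} generic, as the Supplemental version of the corollary explicitly assumes.
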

The proof is presented below Corollary~\ref{SM:cor:universality} in Supplemental Material.




\section{The Algorithms}
\label{sec:algorithm}

\noindent This section presents our polynomial-time algorithms. Working in the standard basis of the general element, we test whether the remaining generators share any nontrivial proper coordinate subspace. We show this condition is equivalent to checking connectivity in a certain graph, yielding a simple and efficient criterion for universality. This framework is inherently constructive: when a set is found to be non-universal, the graph's disconnected components directly indicate which skew-Hermitian generators are missing to repair it. Finally, we apply this graph-theoretic framework to prove that universal controllability can be achieved with merely two generators.

\paragraph{Input model.}
Let $G=U(d)$ or $SU(d)$ with corresponding Lie algebra $\mathfrak{g} = \mathfrak{u}(d)$ or $\mathfrak{su}(d)$. Let $\mathcal{X} = \{X_1, \dots, X_m\} \subset \mathfrak{g}$ be a finite set of skew-Hermitian generators. We specifically require the first generator, $X_1 = i\operatorname{diag}(\theta_1, \dots, \theta_d)$, to be a diagonal general direction in $\mathfrak{g}$. The smallest real Lie subalgebra generated by $\mathcal{X}$ is denoted by $$\mathfrak{l} := \operatorname{Lie}_{\mathbb{R}} \langle X_1, \dots, X_m \rangle \subseteq \mathfrak{g}.$$ Fixing a small step size $\varepsilon > 0$ satisfying Proposition~\ref{prop:smalleps}, we define the associated sampled gate set $$S_\varepsilon := \{\exp(\varepsilon X_1), \dots, \exp(\varepsilon X_m)\} \subset G.$$
We denote the closed subgroup generated by this discrete set as $H_\varepsilon := \overline{\langle S_\varepsilon \rangle} \subseteq G$.

\paragraph{Reduction to the standard basis.}
By Lemma~\ref{SM:lem:inheritance}, it suffices to test the generators $\mathcal X$ themselves for invariant subspaces, rather than the full Lie algebra $\mathfrak l$. A priori, however, invariant subspace detection depends on the choice of basis. The presence of the diagonal general direction $X_1$ removes this ambiguity: any invariant subspace must be a coordinate subspace in the standard basis of $X_1$, which here is the standard basis.

\begin{proposition} \label{pro:irreducible-basis}
Let $X_1 = i\operatorname{diag}(\theta_1, \dots, \theta_d) \in \mathfrak{l}$ be a diagonal matrix with a non-degenerate spectrum (mutually distinct $\theta_j$). If $W \subseteq \mathbb{C}^d$ is an invariant subspace of $\mathfrak{l}$, then $W$ is spanned entirely by a subset of the standard basis vectors $\{e_1, \dots, e_d\}$, and its orthogonal complement $W^\perp$ is spanned by the complementary subset.
\end{proposition}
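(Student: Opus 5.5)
The plan is to use only the invariance of $W$ under the single operator $X_1$, together with the fact that $\mathfrak l\subset\mathfrak u(d)$ consists of skew-Hermitian matrices. Since $X_1\in\mathfrak l$ and $W$ is $\mathfrak l$-invariant, we have $X_1W\subseteq W$. Hence $p(X_1)W\subseteq W$ for every polynomial $p$ with complex coefficients. This holds because $W$ is a complex subspace of $\mathbb C^d$, so it is closed under complex linear combinations of the iterates $X_1^kW$.

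The key step is to realize the coordinate projections as polynomials in $X_1$. Because the $\theta_j$ are mutually distinct, the eigenvalues $i\theta_j$ of $X_1$ are distinct. Lagrange interpolation then gives, for each $j$,
\[
p_j(z)=\prod_{k\neq j}\frac{z-i\theta_k}{i\theta_j-i\theta_k}.
\]
This polynomial satisfies $p_j(X_1)=E_{jj}$, the orthogonal projection onto $\mathbb C e_j$. Consequently $E_{jj}W\subseteq W$ for every $j$. Take any $w=\sum_j w_je_j\in W$. Then each component $w_je_j=E_{jj}w$ lies in $W$, so $e_j\in W$ whenever $w_j\neq 0$. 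Set $J=\{j: e_j\in W\}$. We get $W\subseteq\operatorname{span}\{e_j:j\in J\}$, and the reverse inclusion holds by the definition of $J$. Therefore $W=\operatorname{span}\{e_j\}_{j\in J}$.

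For the complement, the identity $W^\perp=\operatorname{span}\{e_j\}_{j\notin J}$ follows immediately from the orthonormality of the standard basis. It is also worth recording, for the later use in Corollary~\ref{cor:universality}, that $W^\perp$ is again $\mathfrak l$-invariant. For $Y\in\mathfrak l$ we have $Y^\dagger=-Y$, so $\langle Yv,w\rangle=-\langle v,Yw\rangle=0$ for $v\in W^\perp$ and $w\in W$. This gives complete reducibility, though the statement only needs the coordinate description.

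I expect no serious obstacle. The one point needing care is the field of scalars: $\mathfrak l$ is a \emph{real} Lie algebra, so the argument must use that $W$ is a complex subspace, which lets us apply complex polynomials in $X_1$ to it. The hypothesis is only distinctness of the $\theta_j$, which is weaker than the full general-direction condition, and it is exactly what makes the interpolation polynomials exist.
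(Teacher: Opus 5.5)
Your proof is correct and follows the same route as the paper. Both arguments use only the invariance of $W$ under the simple-spectrum diagonal $X_1$; the paper cites as standard linear algebra the fact you prove explicitly through the interpolation identity $p_j(X_1)=E_{jj}$, and your remark that $W$ must be a complex subspace, so that complex polynomials in $X_1$ preserve it, makes explicit a point the paper leaves implicit.
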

The proof is presented below Proposition~\ref{SM:pro:irreducible-basis} in Supplemental Material.

Thus the universality check reduces to a connectivity problem on the standard basis: one must determine whether the off-diagonal actions of the generators connect all basis vectors into a single component.

\paragraph{Algorithmic detection of universality.}

Accordingly, invariant subspace detection becomes a graph-connectivity problem.  Initializing with a single basis vector, we iteratively apply the generators in $\mathcal{X}$ to enlarge the connectable coordinate subspace. If the generated space ceases to expand before encompassing all of $\mathbb{C}^d$, the system is reducible (non-universal).  If all basis vectors lie in a single connected component, the system is strictly irreducible (universal).

\begin{algorithm}[H]
\caption{Testing universality via invariant subspaces}
\label{AL1}
\begin{algorithmic}[1]
\Require $\mathcal{X}=\{X_1,\dots,X_m\}$ (skew-Hermitian, $X_1$ diagonal), dimension $d$
\Ensure Universality test
\State Choose arbitrary $k \in \{1,\dots,d\}$
\State $I_{curr} \rightarrow \{k\}$
\Repeat
    \State $I_{curr} \rightarrow I_{curr}$
    \For{$\ell \in I_{curr}$}
        \For{$j=2$ to $m$}
            \For{$r=1$ to $d$}
                \If{$\langle e_r | X_j | e_\ell \rangle \neq 0$}
                    \State $I_{curr} \rightarrow I_{curr} \cup \{r\}$
                \EndIf {$\langle e_r | X_j | e_\ell \rangle = 0$}
            \EndFor
        \EndFor
    \EndFor
\Until{$I_{curr} = I_{curr}$}
\If{$|I_{curr}| = d$}
    \State \Return ``$\mathcal{X}$ is Universal ($H_\varepsilon = G$)''
\Else
    \State \Return ``$\mathcal{X}$ is Not Universal (Proper invariant subspace found)''
\EndIf
\end{algorithmic}
\end{algorithm}
Because the generated subspace is spanned by the standard basis, the choice of the initial index $k$ is arbitrary; any choice will successfully identify if the space is disjoint.

\paragraph{Fixing non-universality.}

If Algorithm~\ref{AL1} terminates with a proper subset $I_{curr} \subsetneq \{1, \dots, d\}$, the system lacks the necessary generators to be universal. We can repair this by introducing an elementary skew-Hermitian matrix that bridges the invariant subspace and its orthogonal complement.

\begin{algorithm}[H]
\caption{Repairing non-universality}
\label{AL2}
\begin{algorithmic}[1]
\State \textbf{Input:} The incomplete index set $I_{curr}$ from Algorithm~\ref{AL1}.
\While{$|I_{curr}| < d$}
    \State Arbitrarily select an index $a \in I_{curr}$ and an index $b \in \{1, \dots, d\} \setminus I_{curr}$.
    \State Define the elementary bridging matrix $E_{ab} = |e_a\rangle\langle e_b|$.
    \State Append the skew-Hermitian matrix $Y_{ab} = E_{ab} - E_{ba}$ to the generator set $\mathcal{X}$.
    \State Update the active index set: $I_{curr} \rightarrow I_{curr} \cup \{b\}$.
\EndWhile
\State \Return A universal generator set $\mathcal{X}_{new}$.
\end{algorithmic}
\end{algorithm}
Alternatively, one may use the symmetrization $Y_{ab} = i(E_{ab} + E_{ba}) \in \mathfrak{u}(d)$ depending on the specific physical constraints of the quantum control system.

\medskip

\paragraph{Minimal Generator Sets for Hamiltonian Control.}

The Algorithm~\ref{AL2} implies that two skew-Hermitian generators already suffice to produce a universal gate set.
\begin{proposition}
Let $X_1 \in \mathfrak{g}$ be a diagonal general direction, and choose the control generator to be
\begin{equation}
    X_2 = \sum_{j=1}^{d-1} c_j (E_{j, j+1} - E_{j+1, j}),
\end{equation}
where $E_{j, k}$ denotes the standard matrix unit and the coefficients $c_j \in \mathbb{R}$ are all strictly nonzero. Then $\operatorname{Lie}_{\mathbb{R}} \langle X_1, X_2 \rangle = \mathfrak{g}$.
Then for any $\varepsilon > 0$ satisfying Proposition~\ref{prop:smalleps}, the two-element set $S_\varepsilon = \{\exp(\varepsilon X_1), \exp(\varepsilon X_2)\}$ is universal.
\end{proposition}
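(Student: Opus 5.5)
The plan is to prove the Lie-algebraic equality $\operatorname{Lie}_{\mathbb{R}}\langle X_1,X_2\rangle=\mathfrak g$ directly, using a root-space decomposition with respect to the diagonal element $X_1$. The universality of $S_\varepsilon$ then follows from the small-step observation preceding Proposition~\ref{prop:smalleps}: $\operatorname{Lie}(H_\varepsilon)\supseteq\operatorname{Lie}\langle\mathcal X\rangle=\mathfrak g$. As a consistency check, I would also note that the graph used in Algorithm~\ref{AL1} for $\{X_2\}$ is the path $1-2-\cdots-d$. This graph is connected, so Corollary~\ref{cor:universality} already gives $H_\varepsilon=G$; the work is in the stronger algebraic statement.

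\textbf{Step 1: isolate the simple-root components of $X_2$.} Write $X_1=i\operatorname{diag}(\theta_1,\dots,\theta_d)$ and set $\omega_j=\theta_j-\theta_{j+1}$. On the real span of
\[
Y_j=E_{j,j+1}-E_{j+1,j},\qquad Z_j=i(E_{j,j+1}+E_{j+1,j}),
\]
the map $\operatorname{ad}_{X_1}$ acts as a rotation: $[X_1,Y_j]=\omega_j Z_j$ and $[X_1,Z_j]=-\omega_j Y_j$. Hence $\operatorname{ad}_{X_1}^2=-\omega_j^2$ on this plane. Since $X_2=\sum_j c_jY_j$, applying Lagrange interpolation polynomials in $\operatorname{ad}_{X_1}^2$ gives
\[
c_jY_j=\prod_{k\neq j}\frac{\operatorname{ad}_{X_1}^2+\omega_k^2}{\omega_k^2-\omega_j^2}\,X_2\in\mathfrak l .
\]
This requires the $\omega_j^2$ to be pairwise distinct and nonzero. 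Applying $\operatorname{ad}_{X_1}$ once more gives $Z_j\in\mathfrak l$, using $c_j\neq0$.

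\textbf{Step 2: generate $\mathfrak{su}(d)$.} The bracket $[Y_j,Z_j]$ is a nonzero multiple of $i(E_{jj}-E_{j+1,j+1})$, so the full diagonal traceless Cartan subalgebra lies in $\mathfrak l$. The elements $Y_j,Z_j$ are the real forms of the simple-root vectors $E_{j,j+1},E_{j+1,j}$. Iterated brackets, via $[E_{j,j+1},E_{j+1,k}]=E_{j,k}$ in complexified form, produce the real and imaginary parts of every $E_{jk}$ with $j<k$. This is the standard fact that simple root vectors generate the semisimple algebra, and it gives $\mathfrak{su}(d)\subseteq\mathfrak l$. For $\mathfrak g=\mathfrak{su}(d)$ this finishes the equality. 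For $\mathfrak g=\mathfrak u(d)$, I would observe that $\mathbb Q$-independence of $1,\phi_1/2\pi,\dots,\phi_d/2\pi$ forces $\operatorname{tr}X_1=i\sum\theta_j\neq0$. Then $\mathfrak l\supseteq\mathfrak{su}(d)+\mathbb R X_1=\mathfrak u(d)$.

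\textbf{Main obstacle.} The key point is the spectral separation in Step 1: $\omega_j\neq0$ and $\omega_j\neq\pm\omega_k$ for $j\neq k$. Each failure is a nontrivial $\mathbb Q$-linear relation among the $\theta_j$. In the $U(d)$ case such relations are excluded directly by the general-direction hypothesis. In the $SU(d)$ case only $\theta_1,\dots,\theta_{d-1}$ are free, and $\theta_d=-\sum_{i<d}\theta_i+2\pi n$. I would substitute this into each relation $\omega_j=\pm\omega_k$ and check that the coefficients of the free angles do not all vanish, including the cases involving $\omega_{d-1}$, so that the independence hypothesis still rules the relation out. Should any degenerate case survive, the fallback is to use Lemma~\ref{SM:lem:inheritance} and Corollary~\ref{cor:universality}. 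Via connectivity of the path graph, these give $\operatorname{Lie}(H_\varepsilon)=\mathfrak g$, which already yields the universality of $S_\varepsilon$.
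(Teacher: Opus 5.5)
Your proof is correct, and for the Lie-algebra identity it takes a genuinely different route from the paper.

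\textbf{The two routes.} The paper computes no brackets. It notes that the coupling graph of $X_2$ is the path $1-2-\cdots-d$, so no coordinate subspace is invariant. It then invokes Corollary~\ref{cor:universality} (torus density of $\exp(\varepsilon X_1)$, small-step connectedness, Borel--de Siebenthal block structure) to conclude $H_\varepsilon=G$. You stay inside $\mathfrak l$. Non-resonance of $\operatorname{ad}_{X_1}$ on the simple-root planes lets a Lagrange polynomial in $\operatorname{ad}_{X_1}^2$ split $X_2$ into the pieces $c_jY_j$. The simple-root vectors then generate $\mathfrak{su}(d)$, and $\operatorname{tr}X_1\neq0$ supplies the centre. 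This is essentially the Vandermonde argument the paper only alludes to in its closing Remark.

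\textbf{What each buys.} Your route actually proves the stated identity $\operatorname{Lie}_{\mathbb R}\langle X_1,X_2\rangle=\mathfrak g$. The paper's proof does not literally deliver this: it controls $\operatorname{Lie}(H_\varepsilon)$, which can be strictly larger than $\mathfrak l$ in general, because the closure adds the whole torus. Your argument is also elementary and independent of $\varepsilon$. It gives explicit nested commutators, and it needs only $\omega_j\neq0$ and $\omega_j\neq\pm\omega_k$ rather than full $\mathbb Q$-independence. The paper's route, in exchange, uses only the sparsity pattern, so it applies verbatim to any generator set with a connected coupling graph.

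\textbf{The $SU(d)$ check.} Your verification goes through, so the fallback is unnecessary. Since $X_1$ is traceless, $\theta_d=-\sum_{i<d}\theta_i$ exactly, and $\omega_{d-1}=\theta_1+\cdots+\theta_{d-2}+2\theta_{d-1}$. In $\omega_j\mp\omega_{d-1}$ the coefficient of $\theta_{d-1}$ is $\mp2$ for $j<d-2$, and $-3$ or $1$ for $j=d-2$. All other relations involve only the free angles.

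\textbf{A caution on the universality sentence.} Passing from $\mathfrak l=\mathfrak g$ to $H_\varepsilon=G$ through Proposition~\ref{SM:prop:small-step} is not valid for arbitrary small $\varepsilon$. For $\mathfrak g=\mathfrak u(d)$ one has $\det\exp(\varepsilon X_2)=1$. So whenever $\varepsilon\sum_j\theta_j/2\pi\in\mathbb Q$, the determinant takes only finitely many values on $H_\varepsilon$, and $H_\varepsilon\neq U(d)$ even though $\mathfrak l=\mathfrak u(d)$. Such $\varepsilon$ can be arbitrarily small.

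What rescues the step is that $\exp(\varepsilon X_1)$ is general, so $T_G\subseteq H_\varepsilon$. Then, for small $\varepsilon$, $H_\varepsilon$ is connected and of block type. The principal logarithm of $\exp(\varepsilon X_2)$ then places $X_2$ in $\operatorname{Lie}(H_\varepsilon)$.

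So the Corollary~\ref{cor:universality} argument you list as a consistency check should be the actual proof of universality. This requires choosing $\varepsilon$ so that $\exp(\varepsilon X_1)$ stays general, as the paper's Corollary tacitly assumes. Your computation then supplies the Lie-algebra identity.
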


A graph-theoretic interpretation of this construction is given in the proof of Proposition~\ref{SM:prop:minimal-generators} in Supplemental Material. In addition, substituting the symmetric off-diagonal generator $X_2^\prime = i \sum_{j=1}^{d-1} c_j (E_{j, j+1} + E_{j+1, j})$ yields the same connected coupling graph. 

Hence we have answered the question on the minimum number of independent generators required to achieve universality: {\textit{two generators are sufficient to yield a universal gate set}}. This conclusion is in line with established quantum-control results relating complete controllability to a drift Hamiltonian together with a single control field \cite{2001Complete,2001Quantum}.

\bigskip

To illustrate these algorithms, let us present a concrete example.

\begin{example}\label{ex:u4-physics}
Consider a two-qubit system with computational basis $\{|00\rangle,|01\rangle,|10\rangle,|11\rangle\},$ and let the Hermitian drift be $H_x^{(1)}=X\otimes I, H_x^{(2)}=I\otimes X$ with control Hamiltonians
\[
H_{\mathrm{drift}}
=\omega_1\, Z\otimes I+\omega_2\, I\otimes Z+J\, Z\otimes Z.
\]
Set $X_{\mathrm{drift}}=-iH_{\mathrm{drift}},
X_x^{(1)}=-iH_x^{(1)},
X_x^{(2)}=-iH_x^{(2)}.$
Assume that $J\neq 0$ and that $X_{\mathrm{drift}}$ satisfies the incommensurability hypothesis of the theorem.

With the reduced control set $\mathcal X=\{X_{\mathrm{drift}},X_x^{(1)}\},$ the coupling graph in the standard basis of $X_{\mathrm{drift}}$ has two connected components, $\{|00\rangle,|10\rangle\}$ and $\{|01\rangle,|11\rangle\},$ so the generated Lie algebra is reducible and the system is not fully controllable.

After adjoining the second local drive $X_x^{(2)}$, the coupling graph becomes connected. Our criterion therefore certifies that $\operatorname{Lie}_{\mathbb R}
\langle X_{\mathrm{drift}},X_x^{(1)},X_x^{(2)}\rangle
=\mathfrak{su}(4),$ that is, the system is fully controllable up to global phase. This example shows how, for a generic drift-plus-drive architecture, universality can be diagnosed by a simple connectivity condition on the coupling graph.
\end{example}

\section{Conclusion}

In this Letter, we present a concise Lie-algebraic criterion for single‑qudit universality: the existence of a diagonal Hamiltonian with rationally independent eigenvalues and the ability to exponentiate these generators with sufficiently small parameters. This yields two results: (1) a polynomial‑time test for universality — simply check whether the defining representation admits a nontrivial invariant subspace; (2) a systematic method to repair a non‑universal set, together with the proof that two generators suffice for universal control. Moreover, our analysis directly links to hardware‑native Hamiltonians, thereby providing a theoretical foundation for the practical construction of quantum circuits.
\smallskip

\paragraph*{ Acknowledgments.}
We thank Fang Kun for insightful suggestions which help to improve the exposition of the paper.
This work is supported by grants NSFC 12341101 and G02X2403006.

\bibliography{ref}

\clearpage

\setcounter{equation}{0}
\setcounter{figure}{0}
\setcounter{table}{0}
\setcounter{page}{1}
\setcounter{section}{0}
\renewcommand{\theequation}{S\arabic{equation}}
\renewcommand{\thefigure}{S\arabic{figure}}
\renewcommand{\thesection}{S\Roman{section}}
\renewcommand{\thepage}{S\arabic{page}}

\newtheorem{supptheorem}{Theorem}[section]
\renewcommand{\thesupptheorem}{S\arabic{supptheorem}}
\newtheorem{supplemma}[supptheorem]{Lemma}
\newtheorem{suppprop}[supptheorem]{Proposition}
\newtheorem{suppcorollary}[supptheorem]{Corollary}
\newtheorem{suppexample}[supptheorem]{Example}

\begin{center}
\vspace*{.5\baselineskip}
{\textbf{\large Supplemental Material}}\\[1pt] \quad \\
\end{center}

\vspace{1cm}
In this Supplemental Material, we provide the detailed proofs and extended derivations supporting the results in the main text. We may reiterate some of the steps to ensure that the Supplemental Material are explicit and self-contained.

\hypersetup{linkcolor=blue} 
\tableofcontents

\vspace{1cm}

\section{Preliminaries and notations}

Throughout the Supplemental Material, let $G$ ve a compact Lie group, and $\mathfrak{g}=\operatorname{Lie}(G)$ be the corresponding Lie algebra.
Let $\mathcal X=\{X_1,\ldots,X_m\}\subset \mathfrak g$ be a finite set of skew-Hermitian generators. We write
\[
\mathfrak l
:=
\operatorname{Lie}_{\mathbb R}\langle \mathcal X\rangle
=
\operatorname{Lie}_{\mathbb R}\langle X_1,\ldots,X_m\rangle
\subseteq \mathfrak g
\]
for the real Lie algebra generated by \(\mathcal X\).

For a sampling parameter \(\varepsilon>0\), define the corresponding finite gate set
\[
S_\varepsilon
:=
\{\exp(\varepsilon X_1),\ldots,\exp(\varepsilon X_m)\}
\subset G,
\]
$\langle S_\varepsilon\rangle$ be the set generated by group multiplication based on these sampled gates of $S_\varepsilon$. Denote
\[
H_\varepsilon
:=
\overline{\langle S_\varepsilon\rangle}
\subseteq G
\]
the closure of $\langle S_\varepsilon\rangle$.

We specify $G=U(d)$ or $G=SU(d)$ and corresponding $\mathfrak{g}=\mathfrak{u}(d)$ or $\mathfrak{g}=\mathfrak{su}(d)$.

As mentioned below Definition~\ref{def:GeneralElements}, a diagonal generator \(X_1\) with incommensurate spectrum plays a central role since its sampled exponential generates a dense subgroup of the standard maximal torus, according to the Kronecker-Weyl density theorem \cite{BroeckerTomDieck1985}. We use \(\{e_1,\ldots,e_d\}\) for the standard basis of \(\mathbb C^d\), and \(E_{jk}\) for the standard matrix unit.

We then move to the notion of generality of the elements of a Lie group, which is already defined in Definition~\ref{def:GeneralElements}. An element $g \in G$ is said to be general if $\langle g \rangle=T_{G}$, namely, $g$ generates a maximal torus of $G$. As mentioned in the main text, we are interested in the situation whose parameter \(\varepsilon>0\) makes \(\exp(\varepsilon X_1)\) general. The $X_1$ is called general direction. For a diagonal \(X_1=i\operatorname{diag}(\theta_1,\ldots,\theta_d)\), this implies that the angles $1, \theta_1,\ldots,\theta_d$ are linear independent over $\mathbb{Q}$.

\section{Small-step generation and connectedness}

This section explains why the universality of a sampled gate set \(S_\epsilon\) can be faithfully tested through the Lie-algebraic data \(\mathcal X\) provided that the parameter $\varepsilon$ is sufficiently small. The key point is to avoid the disconnectness.

First, we show that the Lie algebra $\mathfrak{l}$ generated by \(\mathcal X\) is contained in the Lie algebra of the closed subgroup \(H_\varepsilon\) generated by the sampled gates. If the infinitesimal generators already generate the full target algebra \(\mathfrak g\), then the sampled gates are universal.

We then prove a complementary connectedness statement: in the small-step regime, the generators cannot jump into disconnected permutation components of a maximal-rank subgroup. This is the technical point that allows the maximal-rank classification to be applied at the Lie-algebra level.

\subsection{Small-step Lie-algebra generation}

\begin{suppprop}[Small-step Lie-algebra generation]\label{SM:prop:small-step}
Let $G$ be either $U(d)$ or $SU(d)$ with Lie algebra $\mathfrak{g}$, and let $\mathcal{X} = \{X_1, \dots, X_m\} \subset \mathfrak{g}$ be a finite set of skew-Hermitian matrices. Let $\mathfrak{l}$ be the smallest real Lie subalgebra of $\mathfrak{g}$ generated by $\mathcal{X}$ through Eq.\eqref{eq:GeneratingLieAlgebra}. For a given $\varepsilon > 0$, define the set of discrete generators
\[
S_\varepsilon := \{ \exp(\varepsilon X_1), \dots, \exp(\varepsilon X_m) \} \subset G,
\]
and let $H_\varepsilon := \overline{\langle S_\varepsilon \rangle}$ be the closed subgroup generated by $S_\varepsilon$. For sufficiently small $\varepsilon > 0$, $H_\varepsilon$ is a closed Lie subgroup of $G$ whose Lie algebra satisfies $\mathfrak{l} \subseteq \text{Lie}(H_\varepsilon)$. Consequently, if the generators span the full algebra ($\mathfrak{l} = \mathfrak{g}$), then the gate set is universal, i.e., $H_\varepsilon = G$.
\end{suppprop}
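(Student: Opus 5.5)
The plan is to identify $\operatorname{Lie}(H_\varepsilon)$ via Cartan's closed-subgroup theorem. Then show that each generator $X_j$ lies in it, by comparing the closure of the cyclic group $\langle \exp(\varepsilon X_j)\rangle$ with the closure of the one-parameter group $\{\exp(tX_j)\}_{t\in\mathbb R}$.

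\textbf{Step 1 (Lie structure).} $H_\varepsilon$ is a closed subgroup of the compact Lie group $G$. By Cartan's closed-subgroup theorem it is an embedded Lie subgroup, with Lie algebra
\[
\mathfrak h_\varepsilon=\{Y\in\mathfrak g:\exp(tY)\in H_\varepsilon\ \forall t\in\mathbb R\}.
\]
This is a real Lie subalgebra of $\mathfrak g$. Hence, once each $X_j\in\mathfrak h_\varepsilon$, minimality of $\mathfrak l$ gives $\mathfrak l\subseteq\mathfrak h_\varepsilon$.

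\textbf{Step 2 (each $X_j\in\mathfrak h_\varepsilon$; main obstacle).} Fix $j$ and diagonalize $X_j=U\,i\operatorname{diag}(\lambda_1,\dots,\lambda_d)U^{-1}$. Let $A_j=\overline{\{\exp(tX_j)\}}$, a torus, and let $C_j(\varepsilon)=\overline{\langle\exp(\varepsilon X_j)\rangle}\subseteq A_j\cap H_\varepsilon$.

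The statement cannot hold for literally every small $\varepsilon$. For example, with $X=i\operatorname{diag}(1,-1)$ and $\varepsilon\in\pi\mathbb Q$, the group $\langle\exp(\varepsilon X)\rangle$ is finite, however small $\varepsilon$ is. So I will interpret ``sufficiently small'' as ``small and avoiding a countable exceptional set''. Concretely, by Kronecker–Weyl, $C_j(\varepsilon)=A_j$ exactly when there is no nonzero character of $A_j$ that is trivial on $\exp(\varepsilon X_j)$ without being trivial on the whole one-parameter group. In coordinates, the condition is as follows: for every integer vector $n\in\mathbb Z^d$ with $\sum_k n_k\lambda_k\neq0$, we need $\varepsilon\sum_k n_k\lambda_k\notin 2\pi\mathbb Z$. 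Each such $n$ excludes only a countable discrete set of values of $\varepsilon$. Hence the set of bad $\varepsilon$ (over all $n$ and all $j$) is countable, and every interval $(0,\varepsilon_0)$ contains admissible values.

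For admissible $\varepsilon$ we get $\exp(tX_j)\in A_j=C_j(\varepsilon)\subseteq H_\varepsilon$ for all $t$, so $X_j\in\mathfrak h_\varepsilon$. The genuine work is this Kronecker density step: stating the genericity condition precisely and checking that it is countable.

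\textbf{Step 3 (conclusion).} Steps 1–2 give $\mathfrak l\subseteq\operatorname{Lie}(H_\varepsilon)$. If $\mathfrak l=\mathfrak g$, then $\operatorname{Lie}(H_\varepsilon)=\mathfrak g$. So $H_\varepsilon$ contains $\exp$ of a neighbourhood of $0$, which makes it an open subgroup. Open subgroups are also closed, and $G=U(d)$ and $G=SU(d)$ are connected, so $H_\varepsilon=G$.

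The smallness of $\varepsilon$ itself is then only needed later, for Proposition~\ref{prop:smalleps}. Alternatively, one can combine both requirements: first choose $\varepsilon_0$ from Proposition~\ref{prop:smalleps}, then choose an admissible $\varepsilon<\varepsilon_0$.
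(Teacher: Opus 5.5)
Your argument is correct. It takes a genuinely different route from the paper's, and in fact a necessary one.

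The paper also begins with Cartan's theorem. It then argues that since each $\exp(\varepsilon X_j)$ lies in a neighbourhood of $I$ where the principal logarithm is defined, $\varepsilon X_j\in\operatorname{Lie}(H_\varepsilon)$. That inference does not hold. The neighbourhood of $I$ on which $H_\varepsilon$ is parametrized by $\exp$ restricted to $\operatorname{Lie}(H_\varepsilon)$ depends on $H_\varepsilon$, hence on $\varepsilon$, and it can shrink to $\{I\}$. Your example $X=i\operatorname{diag}(1,-1)$ with $\varepsilon\in\pi\mathbb Q$ shows exactly this: $H_\varepsilon$ is finite and $\operatorname{Lie}(H_\varepsilon)=0$.

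The final ``consequently'' assertion also genuinely needs genericity, at least for $U(d)$. Take $X_1=i\operatorname{diag}(1,0)$ and $X_2=i(E_{12}+E_{21})$, which generate $\mathfrak u(2)$. For $\varepsilon=2\pi/N$, every element of $H_\varepsilon$ has determinant an $N$-th root of unity, so $H_\varepsilon\neq U(2)$.

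Your approach uses the Kronecker/character criterion to place the whole one-parameter group $\exp(\mathbb R X_j)$ inside $H_\varepsilon$, for all $\varepsilon$ outside a countable set. This is what actually proves the intended result. It replaces the paper's ``every sufficiently small $\varepsilon$'', which is false, with ``all but countably many $\varepsilon$''. It also makes clear that smallness plays no role in this proposition and is only needed for Proposition~\ref{prop:smalleps}; your remark about choosing an admissible $\varepsilon<\varepsilon_0$ handles the combination correctly.

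One point in Step~2 should be stated explicitly. Every character of the subtorus $A_j$ extends to the ambient torus $U\,T_{U(d)}\,U^{-1}$, and this is what justifies indexing the genericity condition by $n\in\mathbb Z^d$. Equivalently, you can compare the annihilator lattices in $\mathbb Z^d$ of the closed cyclic subgroup and of the closed one-parameter subgroup.
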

\begin{proof}
By construction, $H_\varepsilon$ is a closed subgroup of $G$. By Cartan's closed subgroup theorem \cite{lee2003smooth}, any closed subgroup of a Lie group is an embedded Lie subgroup.

Because the exponential map is a local diffeomorphism near the origin, we can choose $\varepsilon > 0$ sufficiently small such that the elements $\exp(\varepsilon X_j)$ reside within a neighborhood of the identity where the principal logarithm is well-defined. For sufficiently small $\varepsilon>0$, each $\exp(\varepsilon X_j)$ lies in a neighborhood of the identity where the logarithm is well defined. Hence $\varepsilon X_j\in\operatorname{Lie}(H_\varepsilon)$, and therefore $X_j\in\operatorname{Lie}(H_\varepsilon)$ for all $j$.

Since $\text{Lie}(H_\varepsilon)$ is a real Lie subalgebra of $\mathfrak{g}$ containing the initial generators $\mathcal{X}$, it must contain the smallest real Lie subalgebra generated by them. Therefore, $\mathfrak{l} \subseteq \text{Lie}(H_\varepsilon)$. 

Finally, if $\mathfrak{l} = \mathfrak{g}$, then $\text{Lie}(H_\varepsilon) = \mathfrak{g}$. Because the groups $U(d)$ and $SU(d)$ are connected, any Lie subgroup possessing the full Lie algebra $\mathfrak{g}$ must be the entire group, yielding $H_\varepsilon = G$.
\end{proof}

\subsection{Proof of Proposition~\ref{prop:smalleps} (Small-step connectedness)}

$U(d)$ contains subgroups that are finite, for example, the permutation subgroup and its subgroups. Since the elements of such subgroups are members of $U(d)$ as well, they can be probably obtained by exponentiation under certain $X_i \in \mathcal{X}$ with particular value of the parameter $\varepsilon$. In that case, the sample group would be finite thus non-universal. To avoid the possibility, we require the $\varepsilon$ sufficient small.

\begin{suppprop}[Small-step exclusion of disconnected components]
\label{SM:prop:smalleps}
Let $H \subset U(d)$ or $SU(d)$ be a closed maximal-rank subgroup with identity component $H^0$. If $H \neq H^0$, then every element $g \in H \setminus H^0$ satisfies $\|g - I\| \ge \sqrt{2}$ in the operator norm. Consequently, if $\varepsilon > 0$ is sufficiently small such that all generators satisfy $\|\exp(\varepsilon X_i) - I\| < \sqrt{2}$, then the generated subgroup $H_\varepsilon = \overline{\langle S_\varepsilon \rangle}$ is connected, meaning $H_\varepsilon = H_\varepsilon^0$.
\end{suppprop}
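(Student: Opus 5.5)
The plan is to prove the distance bound for \emph{every} element of $H\setminus H^0$ directly from the block structure of $H^0$ given by Theorem~\ref{thm:block}. I would avoid passing to a monomial coset representative, because a bound for one representative of a coset does not transfer to the other elements of that coset.

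First, since $H$ has maximal rank it contains a maximal torus $T$. After conjugating by an element of $G$ we may take $T=T_G$; conjugation by a unitary preserves the operator norm and fixes $I$, so nothing is lost. Because $T$ is connected, $T\subset H^0$. Theorem~\ref{thm:block} then says that, after a permutation of coordinates, $H^0=\prod_j U(d_j)$ [resp. $S(\prod_j U(d_j))$] with respect to an orthogonal decomposition $\mathbb C^d=V_1\oplus\cdots\oplus V_r$ into coordinate blocks.

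The key step is to show that each $g\in H$ permutes the blocks $V_j$. Since $H^0$ is normal in $H$, the map $x\mapsto gxg^{-1}$ is an automorphism of $H^0$. Hence $g$ carries $H^0$-invariant irreducible subspaces to $H^0$-invariant irreducible subspaces. The $V_j$ are exactly the irreducible $H^0$-summands, and they are pairwise inequivalent. To see this, use the torus characters: the diagonal torus acts on distinct coordinates by distinct characters, so no two blocks share a weight. By inequivalence, the decomposition into irreducibles is unique, so $gV_j=V_{\sigma(j)}$ for some permutation $\sigma$.

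Next I would show that $\sigma$ is nontrivial whenever $g\notin H^0$. If $\sigma=\mathrm{id}$, then $g$ preserves every block, so $g\in\prod_j U(V_j)$. In the $U(d)$ case this group is $H^0$. In the $SU(d)$ case $\det g=1$, so $g\in S(\prod_j U(V_j))=H^0$. Either way $g\in H^0$, which is a contradiction. So for $g\in H\setminus H^0$ there is some $j$ with $\sigma(j)=k\neq j$. Choose a unit vector $v\in V_j$. Then $gv\in V_k\perp V_j$, and
\[
\|(g-I)v\|^2=\|gv\|^2+\|v\|^2=2 .
\]
Hence $\|g-I\|\ge\sqrt2$.

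For the consequence, first choose $\varepsilon$ small enough that $\exp(\varepsilon X_1)$ remains general. Then $T_G\subset H_\varepsilon$, so $H_\varepsilon$ is closed and of maximal rank, and the first part applies to it. If, in addition, $\|\exp(\varepsilon X_i)-I\|<\sqrt2$ for all $i$, then every generator lies in $H_\varepsilon^0$. Since $H_\varepsilon^0$ is a closed subgroup, it contains $\langle S_\varepsilon\rangle$ and hence its closure $H_\varepsilon$. Therefore $H_\varepsilon=H_\varepsilon^0$.

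I expect the main obstacle to be the inequivalence and uniqueness argument for the blocks, that is, making sure $g$ cannot mix blocks nontrivially. The torus-weight argument should handle it cleanly, because $T_G\subset H^0$ acts on distinct coordinate lines by distinct characters.
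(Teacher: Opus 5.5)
Your proof is correct and has the same skeleton as the paper's. Every $g\in H\setminus H^0$ permutes the orthogonal blocks of $H^0$ nontrivially. A unit vector $v$ in a moved block then satisfies $gv\perp v$, so $\|g-I\|\ge\sqrt2$. Finally, the small generators are trapped in the closed subgroup $H_\varepsilon^0$.

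Where you differ is the key structural step. The paper cites Borevich--Krupetskii for $H=H^0\rtimes F$, with $F$ a finite group of block permutations. You instead derive the block-permuting property from three ingredients: Theorem~\ref{thm:block} applied to $H^0$, the automatic inclusion $H\subseteq N_G(H^0)$, and the fact that the diagonal torus has pairwise distinct weights. The last point makes $\bigoplus_j V_j$ the unique, multiplicity-free decomposition into $H^0$-irreducibles. You also check explicitly that a trivial permutation forces $g\in H^0$ (using $\det g=1$ for $SU(d)$), which the citation leaves implicit.

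The paper's route is shorter. Yours is self-contained and recovers the same structure, since $g\mapsto\sigma$ is a homomorphism $H\to S_r$ with kernel exactly $H^0$. It also avoids the supplement's overstatement $H=N_{U(d)}(H^0)$, which is false in general: $H=T_{U(2)}$ is connected with two equal blocks, yet its normalizer is not. Only $H\subseteq N_G(H^0)$ is ever needed. Your worry about coset representatives does not affect the paper's argument, since $H^0$ fixes each block and so every element of a coset induces the same permutation.

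In the second part you make explicit something the paper leaves implicit: the first part applies to $H_\varepsilon$ only once $H_\varepsilon$ has maximal rank. This hypothesis is genuinely needed. For $\mathcal X=\{i\,\mathrm{diag}(1,0)\}$ and $\varepsilon=2\pi/N$, $H_\varepsilon$ is finite cyclic even though its generator lies within $|e^{2\pi i/N}-1|<\sqrt2$ of $I$.

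One wording fix: generality of $\exp(\varepsilon X_1)$ is a genericity condition, not a smallness one. It holds for all $\varepsilon$ outside the countable set that produces a rational relation among $1,\varepsilon\theta_1/2\pi,\dots$. You should therefore take $\varepsilon$ small \emph{and} generic, rather than ``small enough that it remains general.''
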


\begin{proof}
If $H \neq H^0$, then $H$ is obtained from the connected block-diagonal subgroup $H^0$ by adjoining a finite group of block permutations \cite{BorevichKrupetskii1981, AntoneliForgerGaviria2012}. Thus, any element $g \in H \setminus H^0$ acts as a nontrivial permutation $p$ on the orthogonal block decomposition $\mathbb{C}^d = \bigoplus_k V_k$ defining $H^0$. 

Because the permutation is nontrivial, there exists a unit vector $v \in V_a$ such that $gv \in V_b$ with $b \neq a$. Since $g$ is a unitary operator, it preserves the vector norm, meaning $\|gv\| = \|v\| = 1$. Furthermore, because the invariant blocks are mutually orthogonal ($V_a \perp V_b$), the Pythagorean theorem yields
\begin{equation}
    \|gv - v\| = \sqrt{\|gv\|^2 + \|v\|^2} = \sqrt{1 + 1} = \sqrt{2}.
\end{equation}
By the definition of the operator norm, $\|g - I\| \ge \|(g - I)v\| = \|gv - v\| = \sqrt{2}$.

For the continuous generators, expanding the exponential yields $\exp(\varepsilon X_i) = I + \varepsilon X_i + O(\varepsilon^2)$. Because the set of generators $\mathcal{X}$ is finite, we can choose $\varepsilon > 0$ sufficiently small such that the condition $\|\exp(\varepsilon X_i) - I\| < \sqrt{2}$ is satisfied for all $i$ simultaneously. Under this small-step constraint, none of the generators can lie in the disconnected components $H \setminus H^0$, meaning they must all reside within $H^0$. Since $H^0$ is a closed subgroup, taking the closure of the generated set gives $H_\varepsilon \subseteq H^0$. Therefore, $H_\varepsilon$ is connected.
\end{proof}

\medskip

\section{Maximal-rank structure for unitary groups}

\subsection{The Borel–de Siebenthal for unitary groups}
We first summarize the classification of maximal-rank subgroups in the unitary case; see Borevich-Krupetskii~\cite{BorevichKrupetskii1981} and Borel--de Siebenthal~\cite{BorelDeSiebenthal1949} for details.

\begin{suppprop}[Maximal-rank subgroups of $U(d)$]
Up to conjugacy, every connected closed subgroup of maximal rank in
$U(d)$ that contains $D_d$ is a block subgroup of the form
\[
U(d_1)\times \cdots \times U(d_k) \subset U(d),
\quad
d_1+\cdots+d_k=d,\quad d_\ell \geq 1.
\]
Then:
\begin{enumerate}
    \item[(a)] The identity component of any maximal-rank subgroup
    $H\subset U(d)$ containing $D_d$ is, up to conjugacy, of the form 
    $$H^0 \simeq U(d_1,\ldots,d_k).$$

    \item[(b)] The full subgroup $H$ is the normalizer of $H^0$ in
    $U(d)$:
    \[
    H=N_{U(d)}(H^0)=H^0\rtimes F,
    \]
    where $F$ is a finite group generated by permutations of those
    blocks $U(d_\ell)$ which have the same dimension. In particular,
    $H$ is connected if and only if all block sizes $d_\ell$ are
    distinct. Moreover, the Lie algebra of $H$ is always
    \[
    \operatorname{Lie}(H)=\operatorname{Lie
    }(H^0)\simeq
    \mathfrak{u}(d_1)\oplus\cdots\oplus\mathfrak{u}(d_k).
    \]
\end{enumerate}

Thus, there are two basic types of maximal subgroups of maximal rank
in $U(d)$: connected block subgroups $U(p)\times U(q)$ with $p\neq q$,
and nonconnected block-monomial normalizers when equal-size blocks occur.

For $SU(d)$, type $A_{d-1}$, the situation is analogous, with the
additional determinant-one constraint.
\end{suppprop}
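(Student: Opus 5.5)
The plan is to work entirely with root spaces relative to the diagonal torus $D_d$, first pinning down the identity component $H^0$ (preamble and part (a)), and then analysing how the full group $H$ can sit over $H^0$ (part (b)).

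\textbf{Step 1: the identity component.} Let $H\supseteq D_d$ be closed. Then $H^0$ is a closed connected subgroup, and it still contains $D_d$ because $D_d$ is connected. Consequently $\mathfrak h=\operatorname{Lie}(H^0)$ is $\operatorname{Ad}(D_d)$-stable. After complexifying, $\mathfrak h_{\mathbb C}$ is spanned by the diagonal matrices together with some set $\Phi_H$ of root vectors $E_{jk}$. Since $\mathfrak h$ is real and skew-Hermitian, $E_{jk}\in\mathfrak h_{\mathbb C}$ forces $E_{kj}\in\mathfrak h_{\mathbb C}$. Since $[E_{ij},E_{jk}]=E_{ik}$ for $i\neq k$, closure under brackets makes the relation ``$j\sim k$ iff $j=k$ or $E_{jk}\in\mathfrak h_{\mathbb C}$'' transitive. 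Hence $\sim$ is an equivalence relation on $\{1,\dots,d\}$, with classes $B_1,\dots,B_k$ of sizes $d_1,\dots,d_k$. It follows that $\mathfrak h=\bigoplus_\ell\mathfrak u(B_\ell)$. Connectedness gives $H^0=\prod_\ell U(B_\ell)$, which is conjugate by a permutation matrix to $U(d_1)\times\cdots\times U(d_k)$. This proves the preamble and (a).

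\textbf{Step 2: $H$ normalizes $H^0$ and lies in $H^0\rtimes F$.} Since $H^0$ is normal in $H$, we have $H\subseteq N:=N_{U(d)}(H^0)$. To compute $N$, take $g\in N$. Then $gD_dg^{-1}$ is a maximal torus of $H^0$, so there is $h\in H^0$ with $hgD_dg^{-1}h^{-1}=D_d$. Thus $n:=hg$ lies in $N_{U(d)}(D_d)$, which is the group of monomial matrices, i.e.\ $D_d\rtimes S_d$. Now $n$ also normalizes $\mathfrak h$, so its underlying permutation $\sigma$ maps root vectors of $\mathfrak h$ to root vectors of $\mathfrak h$. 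This means $\sigma$ preserves $\sim$ and therefore permutes the blocks $B_\ell$. Modulo $\prod_\ell S_{B_\ell}\subset H^0$, such a $\sigma$ is a block permutation sending each block to a block of the same size. Hence $N=H^0\rtimes F$, where $F$ is the finite group of permutation matrices that map each block onto an equal-size block while preserving the internal order. Since $H^0\subseteq H\subseteq N$, the group $H/H^0$ embeds in $F$. In particular $[H:H^0]<\infty$, which gives $\operatorname{Lie}(H)=\operatorname{Lie}(H^0)\simeq\bigoplus_\ell\mathfrak u(d_\ell)$.

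\textbf{Step 3: the equality $H=N$ and connectedness.} This is the delicate step, because as literally worded the equality fails. For example, $H=H^0=U(1)\times U(1)\subset U(2)$ is closed, connected and contains $D_2$, yet it is strictly smaller than its normalizer. So what I will actually prove is $H^0\subseteq H\subseteq N_{U(d)}(H^0)=H^0\rtimes F$. I will then add that $H=N$ holds precisely when $H$ is maximal among closed subgroups with identity component $H^0$. This is the setting of the ``maximal subgroups'' classification and is the reading under which the text's remark about two types is meant.

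The connectedness claims then follow:
\begin{itemize}
\item If all block sizes $d_\ell$ are distinct, then $F$ is trivial, so $H=H^0$ is connected.
\item Conversely, if $H=N$ and two blocks have equal size, then the nontrivial block swap lies in $N\setminus H^0$, so $H$ is not connected.
\end{itemize}

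\textbf{The $SU(d)$ case.} Intersecting everything with $SU(d)$ gives the analogous statements. The identity component is $S(\prod_\ell U(d_\ell))$, and $F$ is replaced by suitably signed block permutations so that the determinant stays $1$.

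The main obstacle is Step 3. Part (b) holds only for the normalizer, that is, for maximal $H$, and the proof must state this hypothesis explicitly rather than claim equality for every $H$.
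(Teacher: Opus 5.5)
Your argument is correct, and your objection to part (b) is well founded. The paper does not actually prove this proposition; it presents it as a summary of Borel--de Siebenthal and Borevich--Krupetskii.

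Your Step~1 is the same root-space argument the paper gives for Theorem~\ref{thm:block}: complexify, write $\mathfrak h_{\mathbb C}=\mathfrak t_{\mathbb C}\oplus\bigoplus_{\Phi_H}\mathbb C E_{jk}$, and read off the blocks. You get symmetry of $\Phi_H$ from the compact real form and transitivity from $[E_{ij},E_{jk}]=E_{ik}$. Your use of closure of $\mathfrak h_{\mathbb C}$ under $X\mapsto X^*$ is a cleaner justification of symmetry than the paper's.

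For (b) the paper only cites the literature, and it invokes ``adjoining a finite group of block permutations'' in the proof of Proposition~\ref{prop:smalleps}. Your Step~2 supplies an actual proof that $N_{U(d)}(H^0)=H^0\rtimes F$. You move $gD_dg^{-1}$ back to $D_d$ by an element of $H^0$, land in the monomial group $N_{U(d)}(D_d)$, and note that the underlying permutation must preserve $\sim$.

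Your example $H=D_2\subset U(2)$ correctly shows that, for general $H$, both $H=N_{U(d)}(H^0)$ and the ``only if'' half of the connectedness claim fail. What is true is $H^0\subseteq H\subseteq H^0\rtimes F$, with equality precisely when $H$ is maximal among closed subgroups having identity component $H^0$. This inclusion is all the paper uses downstream. The small-step connectedness argument only needs that each $g\in H\setminus H^0$ lies in $H^0P$ for some nontrivial $P\in F$, so that $g$ maps some block $V_a$ onto an orthogonal block $V_b$.

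One caution on your $SU(d)$ sentence. The signed block permutations $F'$ do give $N_{SU(d)}(H^0)=H^0F'$, but they need not form a complement to $H^0$. For the diagonal torus $T\subset SU(2)$, every element of $N_{SU(2)}(T)\setminus T$ squares to $-I$. So you should not assert a semidirect product there; the paper's companion $SU(d)$ proposition overstates this too.
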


\begin{suppprop}[Maximal-rank subgroups of $SU(d)$]
Let
\[
T_{SU(d)}=D_d\cap SU(d)
\]
be the standard maximal torus in $SU(d)$. Up to conjugacy, every closed
subgroup of maximal rank in $SU(d)$ that contains $T_{SU(d)}$ has
identity component of the form
\[
H^0\simeq S\bigl(U(d_1)\times\cdots\times U(d_k)\bigr),
\quad
d_1+\cdots+d_k=d,\quad d_\ell\geq 1,
\]
where $S\bigl(U(d_1)\times\cdots\times U(d_k)\bigr):=
\operatorname{diag}(A_1,\ldots,A_k)$,
with
\[
A_\ell\in U(d_\ell),
\qquad
\prod_{\ell=1}^k \det(A_\ell)=1.
\]
As in the unitary case, the full subgroup is the normalizer
\[
H=N_{SU(d)}(H^0)=H^0\rtimes F,
\]
where $F$ is a finite group of block permutations preserving the
determinant-one condition. In particular, $H$ is connected if and only
if all block sizes $d_\ell$ are distinct. The Lie algebra of $H$ is
\[
\operatorname{Lie}(H)=\operatorname{Lie}(H^0)\simeq
\mathfrak{su}(d_1)\oplus\cdots\oplus\mathfrak{su}(d_k)
\oplus \mathrm{i}\mathbb{R}^{k-1}.
\]
\end{suppprop}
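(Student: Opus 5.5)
The plan is to work entirely inside $\mathfrak{su}(d)$, which reduces at once to $\mathfrak{u}(d)$. The reduction goes as follows. The commutant of the torus $T_{SU(d)}$ in $U(d)$ is $D_d$, so every closed subgroup $H\supseteq T_{SU(d)}$ is normalized by $D_d$. The product $HD_d$ is then a closed subgroup of $U(d)$ containing $D_d$, and its identity component is $H^0 D_d$. Conversely, $H^0 = (H^0D_d\cap SU(d))^0$, because the Lie algebras agree modulo the center $i\mathbb{R}I$. So it suffices to prove the statement for $U(d)$ and then add the determinant-one constraint, which produces $S(\prod U(d_j))$ and the Lie algebra $\bigoplus\mathfrak{su}(d_j)\oplus i\mathbb{R}^{r-1}$.

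\textbf{Step 1 (root-space decomposition).} Let $H\supseteq D_d$ be connected and closed in $U(d)$, and set $\mathfrak h=\operatorname{Lie}(H)$. Then $\mathfrak h$ is $\operatorname{Ad}(D_d)$-invariant and contains $\mathfrak t=i\mathbb{R}^d$. Complexify to get $\mathfrak h_{\mathbb C}\subseteq\mathfrak{gl}(d,\mathbb C)$. The operators $\operatorname{ad}(t)$ with $t\in\mathfrak t$ act on $E_{jk}$ with eigenvalue $i(\theta_j-\theta_k)$, and these characters are pairwise distinct for distinct ordered pairs $(j,k)$, $j\neq k$. Hence
\[
\mathfrak h_{\mathbb C}=\mathfrak t_{\mathbb C}\oplus\bigoplus_{(j,k)\in\Phi}\mathbb C E_{jk}
\]
for some subset $\Phi$ of pairs. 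Because $\mathfrak h$ is closed under the adjoint $X\mapsto -X^\dagger$, the complexification is closed under conjugate transpose, so $\Phi$ is symmetric.

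\textbf{Step 2 (closure).} Since $[E_{jk},E_{kl}]=E_{jl}$ for $j\neq l$, the set $\Phi$ is closed under composition. It is therefore the off-diagonal part of an equivalence relation on $\{1,\dots,d\}$. Let $B_1,\dots,B_r$ be its classes, with sizes $d_j$. Then $\mathfrak h_{\mathbb C}$ equals $\bigoplus_j\mathfrak{gl}(B_j)$, and taking skew-Hermitian parts gives
\[
\mathfrak h=\bigoplus_j\mathfrak u(B_j).
\]

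\textbf{Step 3 (permutation and group level).} Choose a permutation $P$ that reorders the basis so that each class $B_j$ becomes a consecutive block. Then $P\mathfrak hP^{-1}=\bigoplus\mathfrak u(d_j)$. Since $H$ is connected, it is generated by $\exp(\mathfrak h)$, so $PHP^{-1}=\prod U(d_j)$, which is closed and connected.

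\textbf{Main obstacle.} The delicate point is Step 1 in the $SU(d)$ case. There the restricted roots $\theta_j-\theta_k$ must remain distinct on the traceless torus. This holds for $d\geq 3$, and I expect the case $d=2$ to need a direct check. The argument also relies only on weight-space decomposition under a torus containing a regular element, rather than citing Borevich--Krupetskii directly.
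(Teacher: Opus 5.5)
\textbf{Genuine gap: the normalizer half of the statement.} Your argument only ever treats a \emph{connected} group. Step~1 starts from a connected $H\supseteq D_d$, and the reduction tracks only $H^0$. So the second half of the statement is never addressed: $H=N_{SU(d)}(H^0)=H^0\rtimes F$, and ``$H$ is connected iff the $d_\ell$ are distinct''.

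That half cannot be proved, because it is false as written. Take $H=T_{SU(2)}$: it is closed, contains the torus, is connected, and has two blocks of equal size $1$, yet $H\neq N_{SU(2)}(T)$. Moreover, $N_{SU(2)}(T)$ is not a semidirect product at all. Every element of its nontrivial coset has the form $\begin{pmatrix}0&a\\-\bar a&0\end{pmatrix}$ with $|a|=1$ and squares to $-I$, so there is no complement $F$ of order $2$.

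The correct statement is $H^0\trianglelefteq H\subseteq N_{SU(d)}(H^0)$, together with a description of the normalizer. Proving it needs an idea absent from your proposal:
\begin{itemize}
\item The block subspaces $V_j$ are irreducible $H^0$-modules that are pairwise non-isomorphic. This holds because the central torus $\{\operatorname{diag}(z_1I_{d_1},\dots,z_kI_{d_k}):\prod_j z_j^{d_j}=1\}$ acts on them by distinct characters.
\item Hence every $g\in N(H^0)$ permutes the $V_j$, preserving dimensions, so $N(H^0)$ is block-monomial.
\item Therefore $N(H^0)/H^0$ is the group of permutations of equal-size blocks, an extension that need not split.
\end{itemize}
Thus $H/H^0$ embeds in that permutation group. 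Distinct sizes force $H=H^0$, but not conversely, and the ``iff'' is a statement about $N(H^0)$, not about an arbitrary $H$.

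\textbf{The part you do prove.} The identity component and its Lie algebra are handled correctly, and your argument follows the paper's own. The paper gives no proof of this proposition; it is quoted from Borel--de Siebenthal and Borevich--Krupetskii. But its proof of Theorem~\ref{SM:thm:block} is the same chain: root-space decomposition, a symmetric and composition-closed $\Phi$, equivalence classes, a permutation to block form, and exponentiation. Your reduction to $U(d)$ via $HD_d=H\,Z(U(d))$, and your $\dagger$-closure argument for the symmetry of $\Phi$, are tidier than the paper's versions.

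Two small corrections:
\begin{itemize}
\item That $D_d$ commutes with $T_{SU(d)}$ does not imply that $D_d$ normalizes $H$. The right reason is $D_d=Z(U(d))\,T_{SU(d)}$ with $T_{SU(d)}\subseteq H$.
\item Your ``$d=2$'' obstacle is not real. Two distinct roots $e_j-e_k$ differ by a functional with coefficient sum $0$, which is never a nonzero multiple of $e_1+\cdots+e_d$. So their restrictions to the traceless torus are distinct for every $d\ge 2$, and after your reduction to $U(d)$ you never need this fact anyway.
\end{itemize}
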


\subsection{Proof of Theorem~\ref{thm:block} (Block-diagonal structure)}

We next restrict to subgroups containing the standard maximal torus, and give a coordinate-level version of this statement. 
\begin{supptheorem}[Block-diagonal structure]\label{SM:thm:block}
Let $H \subset G$ be a connected closed subgroup of $G = U(d)$ [resp. $SU(d)$]. If $H$ contains the standard maximal torus $T_G$, then there exists a partition $d = d_1 + \dots + d_r$ and a permutation matrix $P$ (representing a reordering of the standard basis) such that the conjugated subgroup $\tilde{H} = P H P^{-1}$ is exactly the block-diagonal subgroup:
\begin{equation}
    \tilde{H} = \prod_{j=1}^r U(d_j) \quad \left[\text{resp. } S\left(\prod_{j=1}^r U(d_j)\right)\right].
\end{equation}
Furthermore, the Lie algebra of $\tilde{H}$, given by $\text{Lie}(\tilde{H}) = P \text{Lie}(H) P^{-1}$, exhibits the identical block-diagonal structure:
\begin{equation}
    \text{Lie}(\tilde{H}) = \bigoplus_{j=1}^{r}\mathfrak{u}(d_j) \quad \left[\text{resp. } \bigoplus_{j=1}^{r}\mathfrak{su}(d_j) \oplus i\mathbb{R}^{r-1}\right]. 
\end{equation}
In particular, any proper such subgroup preserves a nontrivial coordinate subspace.
\end{supptheorem}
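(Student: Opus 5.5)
I will prove the statement directly from the root-space decomposition, using Borel--de Siebenthal / Borevich--Krupetskii only as a guide. Let $\mathfrak h=\operatorname{Lie}(H)$ and $\mathfrak t=\operatorname{Lie}(T_G)$. Since $T_G\subset H$, $\mathfrak t\subset\mathfrak h$ and $\mathfrak h$ is stable under $\operatorname{Ad}(T_G)$.

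\textbf{Step 1: root decomposition of $\mathfrak h$.} Complexify: $\mathfrak h_{\mathbb C}\subset\mathfrak{gl}(d,\mathbb C)$ [resp. $\mathfrak{sl}(d,\mathbb C)$]. Under $\operatorname{Ad}(T_G)$, the space $\mathfrak{gl}(d,\mathbb C)$ splits into the diagonal part and the root lines $\mathbb C E_{jk}$ ($j\neq k$), with pairwise distinct characters $e^{i(\theta_j-\theta_k)}$; this remains true for $T_{SU(d)}$ when $d\ge 2$. Any $\operatorname{Ad}(T_G)$-stable subspace is a sum of weight spaces. Hence
\[
\mathfrak h_{\mathbb C}=\mathfrak t_{\mathbb C}\oplus\bigoplus_{(j,k)\in R}\mathbb C E_{jk}
\]
for some set $R$ of ordered pairs. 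Because $\mathfrak h$ is real and closed under $X\mapsto -X^\dagger$, we get $(j,k)\in R$ exactly when $(k,j)\in R$.

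\textbf{Step 2: $R$ defines an equivalence relation.} Add the diagonal pairs and write $j\sim k$ if $j=k$ or $(j,k)\in R$. Reflexivity and symmetry are already in hand. For transitivity, suppose $(j,k),(k,l)\in R$ with $j\neq l$. Then $[E_{jk},E_{kl}]=E_{jl}\in\mathfrak h_{\mathbb C}$, so $(j,l)\in R$. The equivalence classes $I_1,\dots,I_r$ have sizes $d_1,\dots,d_r$. Choose a permutation $P$ that makes each class a consecutive block. Then $P\mathfrak h_{\mathbb C}P^{-1}$ consists of the full diagonal algebra together with all $E_{jk}$ with $j,k$ in the same block. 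This is $\bigoplus_j\mathfrak{gl}(d_j,\mathbb C)$ [resp. its trace-zero part]. Taking the compact real form gives $\bigoplus\mathfrak u(d_j)$ [resp. $\bigoplus\mathfrak{su}(d_j)\oplus i\mathbb R^{r-1}$]. The abelian summand in the $SU(d)$ case is the traceless span of the block identities, which has dimension $r-1$.

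\textbf{Step 3: from Lie algebras to groups.} $\tilde H=PHP^{-1}$ and $K=\prod U(d_j)$ [resp. $S(\prod U(d_j))$] are both connected closed subgroups, and by Step 2 they have the same Lie algebra. A connected Lie group is generated by the image of its exponential map, so $\tilde H=K$. For the last claim: if $H\neq G$, then $r\ge2$, and the span of the basis vectors indexed by $I_1$ is a nontrivial coordinate subspace invariant under $H$.

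\textbf{Main obstacle.} The key point is Step 1, namely that $\mathfrak h_{\mathbb C}$ is exactly a sum of full root lines and not some smaller subspace inside a weight space. This requires the roots $\theta_j-\theta_k$ to be pairwise distinct characters of the torus; multiplicity one is what makes each weight space a single line $\mathbb C E_{jk}$. In the $SU(d)$ case I will check separately that the roots restricted to $T_{SU(d)}$ stay distinct and nonzero. For $d\ge3$ this follows from the standard root system of type $A_{d-1}$. For $d=2$ the only roots are $\pm 2\theta_1$, which are distinct and nonzero.
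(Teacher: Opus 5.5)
Your proposal is correct and follows essentially the same route as the paper's proof: the root-space decomposition of $\mathfrak h_{\mathbb C}$ relative to $\mathfrak t_{\mathbb C}$, symmetry and bracket-closure of the root set giving an equivalence relation on indices, a block-sorting permutation $P$, and then passing from equal Lie algebras to equal connected groups. Your additions make rigorous several points the paper only asserts: the $\operatorname{Ad}(T_G)$-stability and multiplicity-one argument showing that $\mathfrak h_{\mathbb C}$ is a sum of full root lines, the check that the roots stay distinct and nonzero on $T_{SU(d)}$, and the explicit proof of the final ``in particular'' clause.
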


\begin{proof}
Since $H$ is a connected closed subgroup of $G$, it is uniquely and completely determined by its real Lie algebra $\mathfrak{h} = \text{Lie}(H) \subseteq \mathfrak{g} = \text{Lie}(G)$. By hypothesis, $H$ contains the standard maximal torus $T_G$, which implies that $\mathfrak{h}$ contains the standard Cartan subalgebra $\mathfrak{t} \subset \mathfrak{g}$, consisting of all diagonal skew-Hermitian matrices. Because $\mathfrak{h}$ contains the maximal abelian subalgebra $\mathfrak{t}$, $H$ is a maximal-rank subgroup of $G$.

To analyze the structure of $\mathfrak{h}$, it is highly advantageous to pass to the complexified Lie algebras. Let $\mathfrak{g}_{\mathbb{C}} = \mathfrak{g} \otimes \mathbb{C}$ and $\mathfrak{h}_{\mathbb{C}} = \mathfrak{h} \otimes \mathbb{C}$. For $G = U(d)$ [resp. $SU(d)$], the complexification is $\mathfrak{g}_{\mathbb{C}} = \mathfrak{gl}_d(\mathbb{C})$ [resp. $\mathfrak{sl}_d(\mathbb{C})$], and the complexified Cartan subalgebra $\mathfrak{t}_{\mathbb{C}}$ consists of all diagonal matrices [resp. traceless diagonal matrices].

The standard root space decomposition of $\mathfrak{g}_{\mathbb{C}}$ with respect to $\mathfrak{t}_{\mathbb{C}}$ is given by:
\begin{equation}
    \mathfrak{g}_{\mathbb{C}} = \mathfrak{t}_{\mathbb{C}} \oplus \bigoplus_{j \neq k} \mathfrak{g}_{\alpha_{j,k}},
\end{equation}
where the roots are $\alpha_{j,k} = e_j - e_k$ (with $e_j$ being the linear functional extracting the $j$-th diagonal entry), and the corresponding one-dimensional root spaces are spanned by the standard matrix units, $\mathfrak{g}_{\alpha_{j,k}} = \mathbb{C}E_{j,k}$.

Because $\mathfrak{h}_{\mathbb{C}}$ is a Lie subalgebra that fully contains $\mathfrak{t}_{\mathbb{C}}$, its root space decomposition must be a restriction of the decomposition of $\mathfrak{g}_{\mathbb{C}}$. Therefore, $\mathfrak{h}_{\mathbb{C}}$ must take the form:
\begin{equation}
    \mathfrak{h}_{\mathbb{C}} = \mathfrak{t}_{\mathbb{C}} \oplus \bigoplus_{(j,k) \in \Phi_H} \mathbb{C}E_{j,k},
\end{equation}
where $\Phi_H$ is a subset of the roots of $\mathfrak{g}_{\mathbb{C}}$. Since $\mathfrak{h}$ is the Lie algebra of a compact group, $\Phi_H$ must be symmetric: if $(j,k) \in \Phi_H$, then $(k,j) \in \Phi_H$ because the generators of the compact real form require combinations of $E_{j,k} - E_{k,j}$. Furthermore, because $\mathfrak{h}_{\mathbb{C}}$ is closed under the Lie bracket, if $(j,k) \in \Phi_H$ and $(k,l) \in \Phi_H$ (with $j \neq l$), the commutator $[E_{j,k}, E_{k,l}] = E_{j,l}$ implies that $(j,l) \in \Phi_H$.

We can now define a binary relation $\sim$ on the basis indices $\{1, 2, \dots, d\}$ as follows: $j \sim k$ if and only if $j = k$ or $(j,k) \in \Phi_H$. 
The properties of $\Phi_H$ established above guarantee that $\sim$ is an equivalence relation:
\begin{enumerate}
    \item \textbf{Reflexivity:} $j \sim j$ by definition.
    \item \textbf{Symmetry:} $j \sim k \implies k \sim j$ (since $\Phi_H$ is symmetric).
    \item \textbf{Transitivity:} $j \sim k$ and $k \sim l \implies j \sim l$ (since $\mathfrak{h}_{\mathbb{C}}$ is closed under commutators).
\end{enumerate}

This equivalence relation inherently partitions the set of indices $\{1, \dots, d\}$ into $r$ disjoint equivalence classes, with respective sizes $d_1, d_2, \dots, d_r$ such that $\sum_{m=1}^r d_m = d$. 

Let $P$ be a permutation matrix that reorders the standard basis of $\mathbb{C}^d$ such that indices belonging to the same equivalence class are grouped contiguously. Applying this similarity transformation to the algebra, $\tilde{\mathfrak{h}}_{\mathbb{C}} = P \mathfrak{h}_{\mathbb{C}} P^{-1}$, forces all non-zero root spaces to reside within dense $d_m \times d_m$ blocks along the diagonal. Since all possible root vectors within each equivalence class are present, each block constitutes a full general linear algebra $\mathfrak{gl}_{d_m}(\mathbb{C})$.

Returning to the compact real form by intersecting with $\mathfrak{u}(d)$ [resp. $\mathfrak{su}(d)$], the conjugated real Lie algebra $\tilde{\mathfrak{h}} = P \mathfrak{h} P^{-1}$ perfectly matches the block-diagonal structure:
\begin{equation}
    \text{Lie}(\tilde{H}) = \bigoplus_{m=1}^{r}\mathfrak{u}(d_m) \quad \left[\text{resp. } \bigoplus_{m=1}^{r}\mathfrak{su}(d_m) \oplus i\mathbb{R}^{r-1}\right].
\end{equation}
(Note: In the $SU(d)$ case, the $i\mathbb{R}^{r-1}$ term accounts for the traceless diagonal elements that commute with all $\mathfrak{su}(d_m)$ blocks, representing the center of the subgroup).

Finally, because $H$ is connected, the Lie group is entirely generated by the exponential map of its Lie algebra, $H = \exp(\mathfrak{h})$. Since the exponential map commutes with conjugation, $\tilde{H} = \exp(\tilde{\mathfrak{h}})$. Exponentiating a direct sum of Lie algebras yields the direct product of their corresponding Lie groups, thus establishing:
\begin{equation}
    \tilde{H} = \prod_{m=1}^r U(d_m) \quad \left[\text{resp. } S\left(\prod_{m=1}^r U(d_m)\right)\right].
\end{equation}
This completes the proof.
\end{proof}

\begin{suppexample}\label{SM:ex:U(3) block-diag}
 In $U(3)$ the subgroup
\[
U(2)\times U(1)
=
\left\{
\begin{pmatrix}
A&0\\
0&z
\end{pmatrix}
:
A\in U(2),\ z\in U(1)
\right\}
\]
contains the standard maximal torus and preserves the subspace $\operatorname{span}\{e_1,e_2\}\subset\mathbb C^3$. 
Likewise, in $SU(3)$ the subgroup $S(U(2)\times U(1))$ is proper, maximal rank, and reducible in the defining representation.   
\end{suppexample}

\medskip

\section{Invariant subspaces and reduction to coordinate subspaces}

This section contains the two lemmas needed for the algorithmic criterion. We first show that invariant coordinate subspaces are preserved when passing between $\mathcal X$, $\mathfrak l$, and $\text{Lie}(H_\varepsilon)$, and then prove that, in the presence of a diagonal generator with simple spectrum, every invariant subspace is a coordinate subspace.

\subsection{Inheritance of reducibility}

\begin{supplemma}[Inheritance of reducibility]\label{SM:lem:inheritance}
Let $\mathcal{X} = \{X_1, \ldots, X_m\} \subset \mathfrak{g}$ be a finite set of skew-Hermitian matrices, let $\mathfrak{l}$ be the real Lie algebra generated by $\mathcal{X}$, and let $H_\varepsilon = \overline{\langle S_\varepsilon \rangle}$. The following statements are equivalent:
\begin{enumerate}
    \item[(i)] The finite set of generators $\mathcal{X}$ shares a non-trivial invariant coordinate subspace.
    \item[(ii)] The Lie algebra $\mathfrak{l}$ shares the same invariant coordinate subspace.
    \item[(iii)] The Lie algebra of the generated group, $\text{Lie}(H_\varepsilon)$, shares the same invariant coordinate subspace.
\end{enumerate}
\end{supplemma}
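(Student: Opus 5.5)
The plan is to prove the equivalences through the chain (i)$\Rightarrow$(ii)$\Rightarrow$(iii)$\Rightarrow$(i). The key observation is that for a fixed subspace $W\subseteq\mathbb C^d$, the stabilizer
\[
\mathfrak{s}_W:=\{Y\in\mathfrak g : YW\subseteq W\}
\]
is a real Lie subalgebra of $\mathfrak g$. It is a real linear subspace. It is closed under brackets, because for $w\in W$ we have $[Y,Z]w=Y(Zw)-Z(Yw)\in W$. None of the arguments below use that $W$ is a coordinate subspace, so the lemma holds for arbitrary subspaces, and in particular for coordinate ones.

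For (i)$\Rightarrow$(ii), suppose every $X_j$ preserves $W$. Then $\mathcal X\subseteq\mathfrak s_W$. Since $\mathfrak l$ is the smallest real Lie algebra containing $\mathcal X$, we get $\mathfrak l\subseteq\mathfrak s_W$. For (iii)$\Rightarrow$(i), Proposition~\ref{SM:prop:small-step} gives $\mathcal X\subseteq\mathfrak l\subseteq\operatorname{Lie}(H_\varepsilon)$ for small $\varepsilon$. So if $\operatorname{Lie}(H_\varepsilon)$ preserves $W$, so does every $X_j$. This also yields (ii)$\Rightarrow$(i) directly.

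The main obstacle is (ii)$\Rightarrow$(iii). The issue is that $\operatorname{Lie}(H_\varepsilon)$ may a priori be strictly larger than $\mathfrak l$, so one cannot argue purely inside $\mathfrak l$. I would therefore pass to the group level.

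First, if $X_j W\subseteq W$, then every partial sum of the exponential series of $\varepsilon X_j$ maps $W$ into $W$. Since $W$ is closed, the limit $\exp(\varepsilon X_j)$ also preserves $W$.

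Next, consider the set $K_W=\{g\in G: gW\subseteq W\}$. It is closed under products. Because each $g$ is invertible and $\dim gW=\dim W$, every $g\in K_W$ satisfies $gW=W$, and hence $g^{-1}W=W$; so $K_W$ is a subgroup. It is also closed in $G$: if $g_n\to g$ with $g_n w\in W$ for all $n$, then $gw\in W$ by closedness of $W$. Since $K_W$ is a closed subgroup containing $S_\varepsilon$, it contains $H_\varepsilon=\overline{\langle S_\varepsilon\rangle}$.

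Finally, let $Y\in\operatorname{Lie}(H_\varepsilon)$. Then $\exp(tY)\in H_\varepsilon$ for all $t$, so $\exp(tY)w\in W$ for every $w\in W$. Differentiating at $t=0$ gives $Yw\in W$, using once more that $W$ is closed. Hence $\operatorname{Lie}(H_\varepsilon)\subseteq\mathfrak s_W$, which is (iii).

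If desired, I would also remark that because all operators involved are skew-Hermitian, $W^\perp$ is invariant as well. This makes "shares an invariant subspace" symmetric in $W$ and $W^\perp$, consistent with Proposition~\ref{pro:irreducible-basis}.
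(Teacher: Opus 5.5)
\textbf{Directions (i)$\Rightarrow$(ii)$\Rightarrow$(iii).} Your argument here is correct and is the paper's argument in cleaner form. The paper conjugates by a permutation to block-diagonal form and notes that this form survives exponentials, products and closures. Your closed stabilizer $K_W$, together with the derivative of $\exp(tY)w$ at $t=0$, does the same job for an arbitrary subspace without using invariance of $W^\perp$. It also makes explicit the group-to-algebra step that the paper only asserts.

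\textbf{The gap in (iii)$\Rightarrow$(i).} This gap is shared with the paper. The step rests entirely on $\mathcal X\subseteq\operatorname{Lie}(H_\varepsilon)$, which you quote from Proposition~\ref{SM:prop:small-step}; the paper instead appeals to Proposition~\ref{prop:smalleps} and the logarithm. Smallness of $\varepsilon$ does not give this inclusion: knowing $\exp(\varepsilon X_j)\in H_\varepsilon$ with $\varepsilon X_j$ near $0$ does not force $\exp(tX_j)\in H_\varepsilon$ for all $t$.

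\emph{Counterexample.} Take $d=2$, $\mathcal X=\{X\}$ with $X=E_{12}-E_{21}$, and $\varepsilon=2\pi/N$. Then $H_\varepsilon$ is cyclic of order $N$, so $\operatorname{Lie}(H_\varepsilon)=0$ preserves $\operatorname{span}\{e_1\}$. But $Xe_1=-e_2$, and $X$ preserves no coordinate line. Since $N$ is arbitrary, no threshold on $\varepsilon$ helps. So (iii)$\Rightarrow$(i), and with it Proposition~\ref{SM:prop:small-step}, fails as literally stated.

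\emph{Fix 1: genericity instead of smallness.} A nontrivial character of the torus $T_j=\overline{\exp(\mathbb R X_j)}$ restricts to $t\mapsto e^{ict}$ on $\exp(\mathbb R X_j)$ with $c\neq 0$. Hence $\exp(\varepsilon X_j)$ generates a dense subgroup of $T_j$ unless $\varepsilon c\in 2\pi\mathbb Z$ for one of countably many such $c$. Off a countable set of $\varepsilon$ you therefore get $\exp(tX_j)\in H_\varepsilon$ for all $t$ and all $j$, i.e.\ $\mathcal X\subseteq\operatorname{Lie}(H_\varepsilon)$. This is the same kind of genericity the paper already imposes on $\varepsilon$ to make $\exp(\varepsilon X_1)$ general.

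\emph{Fix 2: use the standing hypothesis $T_G\subseteq H_\varepsilon$.} For $\varepsilon$ in the range of Proposition~\ref{prop:smalleps}, $\exp(\varepsilon X_j)$ lies in the coordinate-block group $H_\varepsilon^0$ of Theorem~\ref{thm:block}. If also $\|\varepsilon X_j\|<\pi$, the principal logarithm of this block-diagonal unitary is $\varepsilon X_j$. It is therefore block-diagonal and lies in $\operatorname{Lie}(H_\varepsilon^0)$.

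With either addition your proof is complete.
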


\begin{proof}
(i) $\iff$ (ii): The forward implication holds because any subspace invariant under matrices $X_i$ and $X_j$ is automatically invariant under their Lie bracket $[X_i, X_j] = X_i X_j - X_j X_i$; by induction, this invariance extends to the entire generated real Lie algebra $\mathfrak{l}$. The converse is trivial since $\mathcal{X} \subset \mathfrak{l}$.

(i) $\implies$ (iii): Suppose $\mathcal{X}$ shares an invariant coordinate subspace, meaning there exists a permutation matrix $P$ such that $P X_j P^{-1}$ is block-diagonal for all $j=1,\dots,m$. By the naturality of the exponential map under conjugation, 
\begin{equation*}
    P \exp(\varepsilon X_j) P^{-1} = \exp(\varepsilon P X_j P^{-1}).
\end{equation*}
This identity ensures that the discrete generators of the group are also block-diagonal under conjugation by $P$. Because matrix multiplication and topological closures preserve block-diagonal structures, the conjugated closed subgroup $\tilde{H}_\varepsilon = P H_\varepsilon P^{-1}$ is strictly block-diagonal. Consequently, its Lie algebra $\text{Lie}(\tilde{H}_\varepsilon) = P \text{Lie}(H_\varepsilon) P^{-1}$ is block-diagonal, meaning $\text{Lie}(H_\varepsilon)$ preserves the exact same invariant subspace.

(iii) $\implies$ (i): Proposition~\ref{prop:smalleps} establishes that for sufficiently small $\varepsilon > 0$, the absence of disconnected components ensures the matrix logarithm maps the generators directly into the algebra, yielding $\varepsilon X_j \in \text{Lie}(H_\varepsilon)$, and thus $X_j \in \text{Lie}(H_\varepsilon)$ for all $j$. If $\text{Lie}(H_\varepsilon)$ preserves an invariant coordinate subspace, then every element contained within it, including the finite subset $\mathcal{X}$, must preserve it as well.
\end{proof}

\medskip

\subsection{Proof of Corollary~\ref{cor:universality}  (Universality criterion)}

\begin{suppcorollary}[Universality via general direction]\label{SM:cor:universality}
Assume $X_1 \in \mathcal{X}$ is a diagonal general direction such that its continuous evolution generates the standard maximal torus, i.e., $\overline{\langle \exp(\varepsilon X_1) \rangle} = T_G$ for a generic $\varepsilon > 0$. 

For sufficiently small $\varepsilon$ satisfying Proposition~\ref{prop:smalleps}, the generated subgroup is universal ($H_\varepsilon = G$) if and only if the set of generators $\mathcal{X}$ acts irreducibly on the standard basis of $\mathbb{C}^d$. Equivalently, universality is achieved if and only if the matrices in $\mathcal{X}$ share no non-trivial invariant coordinate subspace.
\end{suppcorollary}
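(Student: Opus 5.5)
The plan is to prove both directions by reducing to Theorem~\ref{SM:thm:block}, with Proposition~\ref{SM:prop:smalleps} supplying connectedness and Lemma~\ref{SM:lem:inheritance} passing invariant coordinate subspaces between $\mathcal X$ and $\operatorname{Lie}(H_\varepsilon)$.

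\emph{Setup.} First I place $H_\varepsilon$ in the maximal-rank setting. Since $\exp(\varepsilon X_1)\in S_\varepsilon$ and $\overline{\langle \exp(\varepsilon X_1)\rangle}=T_G$, closure gives $T_G\subseteq H_\varepsilon$. So $H_\varepsilon$ is a closed subgroup of maximal rank containing the standard torus. For $\varepsilon$ small enough that $\|\exp(\varepsilon X_i)-I\|<\sqrt2$ for all $i$, Proposition~\ref{SM:prop:smalleps} gives that $H_\varepsilon$ is connected. Theorem~\ref{SM:thm:block} then applies: there is a permutation $P$ and a partition $d=d_1+\dots+d_r$ with
\[
PH_\varepsilon P^{-1}=\prod_j U(d_j)\quad\Bigl[\text{resp. } S\bigl(\textstyle\prod_j U(d_j)\bigr)\Bigr].
\]

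\emph{($\Leftarrow$) Irreducibility implies universality.} Suppose $\mathcal X$ has no nontrivial invariant coordinate subspace. If $r\ge 2$, the span of the basis vectors in the first equivalence class would be a nontrivial coordinate subspace invariant under $\operatorname{Lie}(H_\varepsilon)$. By Lemma~\ref{SM:lem:inheritance}, (iii)$\Rightarrow$(i), it would also be invariant under $\mathcal X$, a contradiction. Hence $r=1$, so $H_\varepsilon=U(d)$ [resp. $SU(d)$]. The (iii)$\Rightarrow$(i) step uses $X_j\in\operatorname{Lie}(H_\varepsilon)$. This follows from the principal-logarithm argument of Proposition~\ref{SM:prop:small-step}, which requires $\varepsilon$ small enough that $\log\exp(\varepsilon X_j)=\varepsilon X_j$.

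\emph{($\Rightarrow$) Universality implies irreducibility.} Suppose $\mathcal X$ preserves a nontrivial coordinate subspace $W=\operatorname{span}\{e_i: i\in I\}$. Lemma~\ref{SM:lem:inheritance}, (i)$\Rightarrow$(iii), shows that $H_\varepsilon$ is block-diagonal with respect to $W\oplus W^\perp$. Such a group cannot equal $G$: for $a\in I$ and $b\notin I$, the element $\exp\bigl(t(E_{ab}-E_{ba})\bigr)\in SU(d)$ does not preserve $W$. The same argument shows that no $\mathcal X$-invariant subspace of any kind can exist if $H_\varepsilon=G$, since $G$ acts irreducibly on $\mathbb C^d$. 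This establishes the stated equivalence between irreducibility and the absence of invariant coordinate subspaces, at least in the direction that matters here.

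\emph{Full equivalence and main obstacle.} For the complete equivalence of ``acts irreducibly'' with ``no invariant coordinate subspace'', I would invoke Proposition~\ref{pro:irreducible-basis}. Any $\mathfrak l$-invariant subspace is $X_1$-invariant. Because $X_1$ has distinct eigenvalues (a consequence of generality), such a subspace is spanned by eigenvectors of $X_1$, that is, by standard basis vectors. The main obstacle is bookkeeping about $\varepsilon$. A single small $\varepsilon$ must simultaneously keep $\exp(\varepsilon X_1)$ general, satisfy the $\sqrt2$ bound, and lie in the logarithm's domain. The first condition fails only on a countable set of $\varepsilon$, because rational independence of $1,\varepsilon\theta_1/2\pi,\dots$ fails countably often. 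The other two conditions hold on an interval, so a valid choice exists.
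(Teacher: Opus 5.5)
Your proposal is correct and follows the paper's proof. It uses the same chain: $T_G\subseteq H_\varepsilon$ from the general direction, connectedness from Proposition~\ref{SM:prop:smalleps}, the block dichotomy of Theorem~\ref{SM:thm:block}, and Lemma~\ref{SM:lem:inheritance} to transfer invariant coordinate subspaces between $\operatorname{Lie}(H_\varepsilon)$ and $\mathcal X$. Your explicit forward direction, the appeal to Proposition~\ref{pro:irreducible-basis}, and the countable-exception choice of $\varepsilon$ only make explicit what the paper leaves implicit.

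One tightening is needed: the claim that each $X_j$ preserves the blocks of $H_\varepsilon$ should rest on $H_\varepsilon$ already being a full block-diagonal group. Then for $\|\varepsilon X_j\|<\pi$ the principal logarithm $\varepsilon X_j$ of $\exp(\varepsilon X_j)$ is a polynomial in it and hence block-diagonal. Do not rely on the general principal-log claim of Proposition~\ref{SM:prop:small-step}, which fails, e.g., when $\exp(\varepsilon X)$ has finite order and $H_\varepsilon$ is finite.
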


\begin{proof}
By the general direction assumption, the closure of the cyclic subgroup generated by $\exp(\varepsilon X_1)$ is exactly the standard maximal torus, yielding $T_G \subset H_\varepsilon$. This guarantees that $H_\varepsilon$ is a closed subgroup of maximal rank. Furthermore, the small-step condition in Proposition~\ref{prop:smalleps} ensures that the generated subgroup is connected ($H_\varepsilon = H_\varepsilon^0$). Therefore, $H_\varepsilon$ satisfies all the preconditions of Theorem~\ref{thm:block}. 

According to Theorem~\ref{thm:block}, there is a strict dichotomy: either $H_\varepsilon$ is the full group $G$, or its Lie algebra $\text{Lie}(H_\varepsilon)$ is a proper block-diagonal subalgebra (up to a basis permutation $P$). By Lemma~\ref{SM:lem:inheritance}, $\text{Lie}(H_\varepsilon)$ exhibits this proper block-diagonal structure if and only if the finite set of generators $\mathcal{X}$ shares a non-trivial invariant coordinate subspace. Consequently, if no such common invariant subspace exists, $\text{Lie}(H_\varepsilon)$ cannot be a proper block-diagonal subalgebra. Because $H_\varepsilon$ is a connected maximal-rank subgroup, this exclusion leaves $\text{Lie}(H_\varepsilon) = \mathfrak{g}$ as the only remaining possibility, thereby forcing $H_\varepsilon = G$.
\end{proof}

\medskip
\subsection{Proof of Proposition~\ref{pro:irreducible-basis} (Reduction to coordinate subspaces)}
\begin{suppprop} \label{SM:pro:irreducible-basis}
Let $X_1 = i\operatorname{diag}(\theta_1, \dots, \theta_d) \in \mathfrak{l}$ be a diagonal matrix with a non-degenerate spectrum (mutually distinct $\theta_j$). If $W \subseteq \mathbb{C}^d$ is an invariant subspace of $\mathfrak{l}$, then $W$ is spanned entirely by a subset of the standard basis vectors $\{e_1, \dots, e_d\}$, and its orthogonal complement $W^\perp$ is spanned by the complementary subset.
\end{suppprop}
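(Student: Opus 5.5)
The plan is to use only the invariance of $W$ under the single element $X_1\in\mathfrak l$, together with the distinctness of the $\theta_j$. Since $W$ is invariant under $\mathfrak l$, it is in particular invariant under $X_1$. It is therefore invariant under every polynomial in $X_1$ with complex coefficients, because $W$ is a complex subspace closed under repeated application of $X_1$ and under linear combinations.

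The key step is to build the spectral projections of $X_1$ as such polynomials. The eigenvalues $i\theta_1,\dots,i\theta_d$ are pairwise distinct, so for each $j$ we may set
\[
p_j(X_1)=\prod_{k\neq j}\frac{X_1-i\theta_k I}{i\theta_j-i\theta_k}.
\]
This is the Lagrange interpolation polynomial, and it equals $E_{jj}$, the orthogonal projection onto $\mathbb C e_j$. This can be checked entrywise on the diagonal: the $\ell$-th diagonal entry is $\prod_{k\neq j}(\theta_\ell-\theta_k)/(\theta_j-\theta_k)=\delta_{j\ell}$. It follows that $E_{jj}W\subseteq W$ for every $j$.

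Now take any $w\in W$ and write $w=\sum_j E_{jj}w=\sum_j w_j e_j$. Each $w_je_j$ lies in $W$, so $e_j\in W$ whenever some $w\in W$ has $w_j\neq 0$. Let $J$ be the set of indices $j$ for which such a $w$ exists. Then $W\subseteq \operatorname{span}\{e_j: j\in J\}$ by the definition of $J$, and the reverse inclusion $\operatorname{span}\{e_j: j\in J\}\subseteq W$ follows from the previous sentence. Hence $W=\operatorname{span}\{e_j:j\in J\}$. Because the standard basis is orthonormal, $W^\perp=\operatorname{span}\{e_j:j\notin J\}$ follows immediately. One may also note that $W^\perp$ is itself $\mathfrak l$-invariant, since $\mathfrak l$ consists of skew-Hermitian operators, although this is not needed for the statement.

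I expect no serious obstacle here; the only delicate point is constructing the coordinate projections from $X_1$. The Lagrange polynomial handles this, but it uses complex-linear combinations of powers of $X_1$, which need not lie in the real Lie algebra $\mathfrak l$. This causes no difficulty, because invariance of the subspace $W$ under $X_1$ alone already gives invariance under all complex polynomials in $X_1$, whether or not those polynomials belong to $\mathfrak l$. The argument uses no earlier result beyond the hypotheses of the statement.
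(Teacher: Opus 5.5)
Your proof is correct and follows the same route as the paper: both use only the invariance of $W$ under $X_1$ and the distinctness of its eigenvalues $i\theta_j$ to conclude that $W$ is spanned by a subset of the eigenvectors $e_j$. The only difference is that the paper cites this as a standard linear-algebra fact, whereas you prove it explicitly by building the projections $E_{jj}=p_j(X_1)$ through Lagrange interpolation.
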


\begin{proof}
Because $W$ is an invariant subspace of $\mathfrak{l}$, it is invariant under $X_1$. The standard basis vectors $\{e_1, \dots, e_d\}$ are exactly the eigenvectors of $X_1$, corresponding to the distinct eigenvalues $i\theta_j$. It is a standard result of linear algebra that any subspace invariant under a diagonalizable matrix with distinct eigenvalues must be spanned by a subset of its eigenvectors. Therefore, $W = \operatorname{span}\{e_k\}_{k \in I}$ for some index subset $I \subseteq \{1, \dots, d\}$. The orthogonality of the standard basis immediately implies $W^\perp = \operatorname{span}\{e_k\}_{k \notin I}$.
\end{proof}

\medskip

\section{Minimal-generator construction}
This section gives the constructive two-generator universality result. The proof is expressed in graph-theoretic terms and shows that a single chain of nearest-neighbor couplings suffices once a diagonal general direction is present.

\begin{suppprop}[Two-generator universality construction]\label{SM:prop:minimal-generators}
Let $G$ be $U(d)$ or $SU(d)$ with corresponding Lie algebra $\mathfrak{g}$. Let $\mathcal{X} = \{X_1, X_2\} \subset \mathfrak{g}$. Assume $X_1$ is a diagonal general direction. Let the second generator be defined by the sum of all simple root generators:
\begin{equation}
    X_2 = \sum_{j=1}^{n-1} c_j (E_{j, j+1} - E_{j+1, j}),
\end{equation}
where $E_{j, k}$ denotes the standard matrix unit and $c_j \neq 0$ are real coefficients. For any $\varepsilon > 0$ satisfying Proposition~\ref{prop:smalleps}, let $S_\varepsilon = \{\exp(\varepsilon X_1), \exp(\varepsilon X_2)\}$. Then $S_\varepsilon$ is universal, meaning $H_\varepsilon := \overline{\langle S_\varepsilon \rangle} = G$.
\end{suppprop}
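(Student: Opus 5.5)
The plan is to reduce the statement to Corollary~\ref{SM:cor:universality} by checking that $\mathcal X=\{X_1,X_2\}$ has no nontrivial common invariant coordinate subspace. The hypotheses of that corollary are met directly. $X_1$ is a diagonal general direction, so $\overline{\langle\exp(\varepsilon X_1)\rangle}=T_G\subset H_\varepsilon$. The small-step choice of $\varepsilon$ makes $H_\varepsilon$ connected by Proposition~\ref{SM:prop:smalleps}. (Here $n$ in the definition of $X_2$ is read as $d$.) Once irreducibility is established, $H_\varepsilon=G$ follows at once.

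The core step is a graph argument on the indices $\{1,\dots,d\}$. Take the coupling graph in which $j$ and $k$ are adjacent iff $\langle e_j|X_2|e_k\rangle\neq 0$. Since $c_j\neq 0$, the only nonzero off-diagonal entries of $X_2$ are $(X_2)_{j,j+1}=c_j$ and $(X_2)_{j+1,j}=-c_j$, so the graph is the path $1-2-\cdots-d$ and is connected. I would now suppose $W=\operatorname{span}\{e_k\}_{k\in I}$ is a common invariant coordinate subspace with $\emptyset\neq I\subsetneq\{1,\dots,d\}$. By Proposition~\ref{SM:pro:irreducible-basis}, every invariant subspace has this form anyway, so coordinate subspaces are the only case to rule out. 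Connectivity of the path gives some $j$ with exactly one of $j,j+1$ in $I$; say $j\in I$ and $j+1\notin I$. Then $X_2e_j$ has $e_{j+1}$-component $-c_j\neq0$, so $X_2e_j\notin W$, a contradiction. The other case is identical, using the entry $c_j$. This is exactly what Algorithm~\ref{AL1} detects: starting from any $k$, the path reaches every index.

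The remaining care is bookkeeping between the three levels rather than any hard estimate. Lemma~\ref{SM:lem:inheritance} says that $\mathcal X$, $\mathfrak l$, and $\operatorname{Lie}(H_\varepsilon)$ share the same invariant coordinate subspaces. Combined with Theorem~\ref{SM:thm:block}, which says a proper connected subgroup containing $T_G$ preserves a nontrivial coordinate subspace, this yields $\operatorname{Lie}(H_\varepsilon)=\mathfrak g$ and hence $H_\varepsilon=G$, since $G$ is connected.

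The statement in the main text also claims $\operatorname{Lie}_{\mathbb R}\langle X_1,X_2\rangle=\mathfrak g$. For that I would give a direct root-space argument, and this is the step I expect to need the most care. Work in $\mathfrak l_{\mathbb C}$ and use the distinct $\theta_j$ (guaranteed by rational independence), with $\operatorname{ad}_{X_1}E_{jk}=i(\theta_j-\theta_k)E_{jk}$. Polynomials in $\operatorname{ad}_{X_1}$ (Vandermonde/Lagrange interpolation on the distinct values $\theta_j-\theta_k$) then project $X_2$ onto each root vector, so every $E_{j,j+1}$ and $E_{j+1,j}$ lies in $\mathfrak l_{\mathbb C}$. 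Brackets $[E_{j,k},E_{k,l}]=E_{j,l}$ then produce all $E_{jk}$ with $j\neq k$, and these give all of $\mathfrak{sl}_d$. For $\mathfrak u(d)$, the trace part comes from $X_1$ when $\operatorname{tr}X_1\neq0$, which rational independence of $1,\theta_1/2\pi,\dots,\theta_d/2\pi$ ensures. Intersecting with the compact real form recovers $\mathfrak g$.

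The delicate point is the interpolation step: it needs the root values $\theta_j-\theta_k$ to separate the root spaces that actually occur in $X_2$. Since $E_{j,j+1}$ and $E_{k,k+1}$ could share a root value, I would only require separation up to equal values. Any such coincidence is then handled by the path connectivity argument above rather than by interpolation.
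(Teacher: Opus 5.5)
Your proof is correct and is essentially the paper's. Both reduce, via Corollary~\ref{SM:cor:universality}, to showing that $\{X_1,X_2\}$ has no common nontrivial invariant coordinate subspace, and both obtain this from the fact that the coupling graph of the tridiagonal $X_2$ is the connected path $1-2-\cdots-d$; your boundary-index argument (some $j\in I$ with $j+1\notin I$ gives $X_2e_j$ the nonzero $e_{j+1}$-component $-c_j$) just makes the paper's connectivity step explicit.

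The root-space paragraph goes beyond this statement, and its closing hedge is both unnecessary and invalid as stated. Rational independence already makes all values $\theta_j-\theta_k$ ($j\neq k$) pairwise distinct, so no coincidences occur. If coincidences did occur, connectivity could not compensate: $X_1=i\operatorname{diag}(1,0,-1)$ and $X_2=E_{12}-E_{21}+E_{23}-E_{32}$ have a connected path graph but generate only a copy of $\mathfrak{su}(2)$.
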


\begin{proof}
Because $X_1$ is a diagonal general direction, Corollary~\ref{cor:universality} establishes that the generated subgroup $H_\varepsilon$ is equal to $G$ if and only if the remaining generator $X_2$ acts irreducibly on the standard basis of $\mathbb{C}^n$. The matrix $X_2$ is strictly tridiagonal, possessing non-zero elements exclusively on its first superdiagonal and first subdiagonal. Evaluating its action on the standard basis yields $X_2 e_k \in \operatorname{span}\{e_{k-1}, e_{k+1}\}$ for all $k$ (where we conventionally define $e_0 = e_{n+1} = 0$). 

In the context of Algorithm~\ref{AL1}, this action defines a linear graph where every adjacent basis vector is transitively linked: $e_1 \leftrightarrow e_2 \leftrightarrow \dots \leftrightarrow e_n$. Because this graph forms a single connected component encompassing all $n$ vertices, no proper subset of the standard basis is closed under the action of $X_2$. This means no non-trivial invariant coordinate subspace exists. By the equivalence established in Corollary~\ref{cor:universality}, the system is strictly irreducible, which forces $H_\varepsilon = G$.
\end{proof}

\begin{remark}
The fact that two generators can generate the entire Lie algebra, holds for any complex simple Lie groups \cite{huang2024bases}, which can be proved by using nonvanishing Vandermonde determinants.
\end{remark}

Since a single generator produces only an abelian one-parameter subgroup, two generators are minimal for universality.




\section{Additional Examples}

We conclude with a simple $U(3)$ example illustrating the repair procedure.
\begin{example}\label{ex:u3-repair}
Let $G=U(3)$ with $\mathfrak{g}=\mathfrak{u}(3)$ and standard basis $\{e_1, e_2, e_3\}$. Let $\mathcal{X} = \{X_1, X_2\}$, where $X_1 = i\operatorname{diag}(\sqrt{2}, \sqrt{3}, \sqrt{5})$ serves as the diagonal general direction, and the control Hamiltonian is
\[
X_2 = \begin{pmatrix} 0 & 1 & 0 \\ -1 & 0 & 0 \\ 0 & 0 & 0 \end{pmatrix} \in \mathfrak{su}(3).
\]
The drift Hamiltonian induces relative phase accumulations, while the control Hamiltonian yields a generalized rotation in the $\{e_1, e_2\}$ subspace:
\[
\exp(\epsilon X_2) = \begin{pmatrix} \cos\epsilon & \sin\epsilon & 0 \\ -\sin\epsilon & \cos\epsilon & 0 \\ 0 & 0 & 1 \end{pmatrix}.
\]
Applying Algorithm~\ref{AL1} to $\mathcal{X}$, we initialize $I_{curr} = \{1\}$. Evaluating the matrix elements of $X_2$, we find $X_2 e_1 = -e_2$ and $X_2 e_2 = e_1$. The index set updates to $I_{curr} = \{1, 2\}$. On the next iteration, applying $X_2$ to $\{e_1, e_2\}$ yields no new basis vectors, and the algorithm terminates. Because $I_{curr} \subsetneq \{1, 2, 3\}$, the system is reducible, preserving $W = \operatorname{span}\{e_1, e_2\}$, and is thus not universal.

To fix this via Algorithm~\ref{AL2}, we select $a=2 \in I_{curr}$ and $b=3 \notin I_{curr}$, and append the bridging matrix:
\[
Y_{23} = E_{23} - E_{32} = \begin{pmatrix} 0 & 0 & 0 \\ 0 & 0 & 1 \\ 0 & -1 & 0 \end{pmatrix} \in \mathfrak{su}(3).
\]
Re-running the universality check with $\mathcal{X}^\prime = \{X_1, X_2, Y_{23}\}$, the index set expands transitively: $\{1\} \xrightarrow{X_2} \{1, 2\} \xrightarrow{Y_{23}} \{1, 2, 3\}$. The algorithm terminates with $\vert I_{curr}\vert = 3$, confirming that $\operatorname{Lie}_{\mathbb{R}}\langle X_1, X_2, Y_{23} \rangle$ is $\mathfrak{u}(3)$, and the augmented set is universal.
\end{example}

\end{document}